\newtheorem{theorem}{Theorem}
\newtheorem{definition}{Definition}
\newtheorem{lemma}[theorem]{Lemma}
\newtheorem{proposition}[theorem]{Proposition}
\theoremstyle{plain}
\newenvironment{proof}{\noindent {\sc Proof.  }}{$\Box$ \medskip}
\newcommand\Ball{\mbox{Ball}}
\newcommand\reals{\mathbb{R}}
\newcommand{\marginlabel}[1]%
{\mbox{}\marginpar{\it{\raggedleft\hspace{0pt}#1}}}
\newcommand\poly{\mbox{poly}} 
\newcommand\quasipoly{\mbox{quasipoly}}  
\newcommand{\RR}{{\mathbb{R}}}
\newcommand\cP{\mathcal P}
\newlength{\pgmtab}  
\newtheorem{sublemma}[theorem]{Sublemma}
\newtheorem{conjecture}[theorem]{Conjecture}
\def\qed{ \ \vrule width.2cm height.2cm depth0cm\smallskip}
\def\eps{\epsilon}
\def\gSR{$(g, h)$-{\sc Sparse Regression}}
\def\nSR{$(g, h)$-{\sc Noisy Sparse Regression}}
\newcommand{\remove}[1]{}
\newcommand{\bfe}{\mathbf{e}}
\newcommand{\bfv}{\mathbf{v}}
\newcommand{\bfb}{\mathbf{b}}
\newcommand{\bfy}{\mathbf{y}}
\newcommand{\bfp}{\mathbf{p}}
\newcommand{\bfBB}{\mathbf{B}}
\newcommand{\bfB}{\mathbf{Ball}}
\newcommand{\bfu}{\mathbf{u}}
\newcommand{\bfw}{\mathbf{w}}
\newcommand{\bfz}{\mathbf{z}}
\newcommand{\bfc}{\mathbf{c}}
\newcommand{\bfx}{\mathbf{x}}
\newcommand{\bfeps}{\boldsymbol{\eps}}
\newcommand{\bfgamma}{\boldsymbol{\gamma}}
\newcommand{\bftheta}{\boldsymbol{\theta}}
\newcommand{\alg}{\mathcal{A}}
\newcommand{\cV}{\mathcal{V}}
\newcommand{\scNP}{\textsc{NP}}
\newcommand{\UA}{\textbf{UA}}
\newcommand{\SAT}{\textsc{SAT}}
\newcommand{\calB}{\mathcal{B}}
\newcommand{\calS}{\mathcal{S}}
\newcommand{\scBPP}{\textsc{BPP}}
\newcommand{\scBPTime}{\textsc{BPTime}}
\newcommand{\scDTime}{\textsc{DTime}}
\newcommand{\eat}[1]{}
\newcommand{\cost}{\text{cost}}
\newcommand{\PCP}{\textsc{PCP}}
\newcommand{\polylog}{\text{polylog}}
\begin{document}
\title{Variable Selection is Hard}
\author{Dean Foster
\thanks{Yahoo Labs, 229 West 43rd St., New York, NY 10036; {\tt dean@foster.net}.} \and
Howard Karloff
\thanks{Yahoo Labs, 229 West 43rd St., New York, NY 10036; {\tt karloff@yahoo-inc.com}.} \and
Justin Thaler
\thanks{Yahoo Labs, 229 West 43rd St., New York, NY 10036; {\tt jthaler@yahoo-inc.com}.}}

\date{$ $}
\maketitle

\sloppy
\begin{abstract}
Variable selection for sparse linear regression is the problem of finding,
given an $m\times p$  matrix $B$ and a target vector $\bfy$,  a sparse vector $\bfx$ such that $B\bfx$ approximately equals $\bfy$.
Assuming a standard complexity hypothesis, we show that no polynomial-time algorithm can find
a $k'$-sparse $\bfx$ with $||B\bfx-\bfy||^2\le h(m,p)$,
where $k'=k\cdot 2^{\log ^{1-\delta} p}$ and $h(m,p)\le p^{C_1} m^{1-C_2}$, where $\delta>0,C_1>0,C_2>0$ are arbitrary.  
This is true even under the promise that 
there is an unknown $k$-sparse vector $\bfx^*$ satisfying $B\bfx^*=\bfy$. 
We prove a similar
result for a statistical version of the problem in which the data are corrupted by noise.

To the authors' knowledge, these are the first hardness results for sparse regression that apply when the algorithm simultaneously has $k'>k$
and $h(m,p)>0$.
\end{abstract}


\section{Introduction}
\label{fullsec:intro}
Consider a linear regression problem in which one is given an $m \times p$ matrix $B$ and a target vector $\bfy \in \reals^m$.  The goal is to approximately represent $\bfy$ as a linear combination of
as few columns of $B$ as possible.  If a polynomial-time
algorithm $\alg$ is presented with $B$ and $\bfy$, and it is
known that $\bfy$ is an {\em exact} linear combination of some $k$ columns of $B$, 
and $\alg$ is allowed to choose {\em more than $k$} columns of $B$, how many columns must $\alg$ choose in order to generate a linear combination which is {\em close to} $\bfy$?

Note that we have allowed $\alg$ to ``cheat'' both on the number of columns and on the accuracy of the resulting linear combination.
In this paper, we show the problem is intractable despite the allowed cheating.

Formally, we define \gSR\ as follows.  Let $\bfe^{(m)}$ denote the $m$-dimensional vector of 1's, and for any vector $\bfz$, and let $\|\bfz\|_0$ denote the number of nonzeros in $\bfz$. Let $g(\cdot)$ and $h(\cdot, \cdot)$ denote arbitrary functions. An algorithm for \gSR\ satisfies the following.

\begin{itemize}
\item
Given:  An $m\times p$ Boolean matrix $B$ and a positive integer $k$ such that there is a real $p$-dimensional vector $\bfx^*$, $\|\bfx^*\|_0\le k$, such that $B\bfx^*=\bfe^{(m)}$.
(Call such an input {\em valid}.)
\item
Goal:  Output a (possibly random) $p$-dimensional vector $\bfx$ with $\|\bfx\|_0\le k\cdot g(p)$ such that $\|B\bfx-\bfe^{(m)}\|^2\le h(m, p)$ with high probability over the algorithm's internal randomness. 
\end{itemize}

Since we are focused on hardness, restricting $B$ to have entries in $\{0, 1\}$  and the target vector to be $\bfe^{(m)}$ only makes the result stronger.


There is, of course, an obvious exponential-time deterministic
algorithm for \gSR, for $g(p)=1$ for all $p$, and $h(m, p)=0$:  enumerate all subsets $S$ of $\{1,2,...,p\}$ of size $k$.
For each $S$, use Gaussian elimination to determine if there is an $\bfx$ with $\bfx_j=0$ for all $j\not\in S$ such that $B\bfx=\bfe^{(m)}$, and if there is one, to find it.
Since $B\bfx^*=\bfe^{(m)}$ and $\|\bfx^*\|_0\le k$, for at least one $S$ the algorithm must return an $\bfx$ with $\|\bfx\|_0\le k, B\bfx=\bfe^{(m)}$.

Before getting too technical, we warm up with a simple hardness
proof for \gSR.  The short proof can be understood on its own.
This theorem is just a warmup;  later we will prove a much
stronger result (relying, unfortunately, on a stronger complexity
assumption).  To the authors' knowledge, this simple proof
was not known, though similar arguments had been used previously to show weaker
hardness results for related problems
(cf. \cite[Proposition 6]{aroraetal} and
\cite{ipl}).)

\begin{theorem}\label{fullthm0}
Let $0<\delta<1$.
If there is a deterministic polynomial-time algorithm $A$ for \gSR, 
for which $g(p)=(1-\delta)\ln p$ and $h(m,p)=m^{1-\delta}$, 
then $\SAT\in \scDTime(n^{O(\log \log
n)})$.
\end{theorem}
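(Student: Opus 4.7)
The plan is to reduce from Feige's $(1-\epsilon)\ln n$ inapproximability of Set Cover, which asserts that if $\SAT \notin \scDTime(n^{O(\log\log n)})$, then for every $\epsilon > 0$ no polynomial-time algorithm approximates Set Cover within factor $(1-\epsilon)\ln n$. An additional feature of Feige's construction that I would exploit is that in the YES case the optimum cover of size $k$ is an exact partition of the universe.

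Given a Feige instance $(U, \{S_1,\ldots,S_p\})$ with $|U|=m$ and promised optimum $k$, first form the $m \times p$ Boolean incidence matrix $B_0$ with $(B_0)_{ij}=1$ iff $i\in S_j$. Then amplify the $\ell_2^2$-gap by replicating each row of $B_0$ a total of $R$ times, where $R$ is a suitably large polynomial in $m$ (for instance $R = \lceil m^{2/\delta}\rceil$, chosen so that $R > (mR)^{1-\delta}$). This produces an $m' \times p$ Boolean matrix $B$ with $m' := mR$, and target vector $\mathbf{e}^{(m')}$. The whole construction runs in polynomial time and I would feed it to the hypothesized algorithm $A$.

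For correctness, the YES case is immediate: the partition cover of size $k$ yields a vector $\mathbf{x}^* \in \{0,1\}^p$ with $\|\mathbf{x}^*\|_0 = k$ and $B\mathbf{x}^* = \mathbf{e}^{(m')}$, so the regression instance is valid. The heart of the argument is the NO case. Suppose $\mathbf{x} \in \reals^p$ has $\|\mathbf{x}\|_0 \le k(1-\delta)\ln p$; by Feige's hardness (applied with $\epsilon$ chosen sufficiently small to absorb the constant-factor discrepancy between $\ln p$, $\ln m$, and $\ln m'$, all of which are polynomially related), the support of $\mathbf{x}$ is \emph{not} a set cover of $U$. Hence some element $i^\ast \in U$ is uncovered, meaning that for every one of the $R$ rows of $B$ corresponding to a copy of $i^\ast$, the row restricted to $\mathrm{supp}(\mathbf{x})$ is the zero vector; consequently $(B\mathbf{x})_i = 0$ on all these rows \emph{regardless of the real values chosen in $\mathbf{x}$}. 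Thus $\|B\mathbf{x} - \mathbf{e}^{(m')}\|^2 \ge R > (m')^{1-\delta} = h(m', p)$, and $A$ fails on this NO instance.

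The existence of $A$ would therefore give a polynomial-time distinguisher for Feige's YES/NO instances, forcing $\SAT \in \scDTime(n^{O(\log\log n)})$. The conceptual obstacle — and the whole point of the row-replication gadget — is that the real relaxation of $\mathbf{x}$ could, a priori, let the algorithm trade coverage failures for small $\ell_2^2$ error by cleverly weighting columns. Replication defeats this: a single uncovered element in the combinatorial problem becomes $R$ genuinely zero rows in the linear-algebraic problem, on which no real coefficient assignment can reduce the error. Matching the amplification factor $R$ to the slack $m^{1-\delta}$ in $h$ is the only real parameter calculation in the proof.
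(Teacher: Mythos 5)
Your proposal is correct and follows essentially the same route as the paper: Feige's reduction with the exact-partition property in the YES case, row replication to amplify a single uncovered element into squared error at least $R > (m')^{1-\delta}$, and the same parameter matching (the paper stacks $r=\lceil m^{1/\delta-1}\rceil+1$ copies, which is your construction up to row permutation and a tighter choice of $R$). The only detail the paper spells out that you leave implicit is the promise-problem bookkeeping: on NO instances the input to $A$ is invalid, so one runs $A$ with a time cutoff and explicitly verifies the sparsity and error of whatever it outputs.
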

\begin{proof}  Feige \cite{feige}
gives a reduction from \SAT,  running in deterministic time $N^{O(\log \log N)}$ on \SAT\ instances of size $N$,
to {\sc Set Cover}, in which the resulting, say, $m\times p$,
incidence matrix $B$ (whose rows
are elements and columns are sets) has the following properties.
There is a (known) $k$ such that
(1) if a formula $\phi\in \SAT$, then there is a collection of 
$k$ {\em disjoint} sets which covers the universe (that is, there
is a collection of $k$ columns of $B$ whose sum is $\bfe^{(m)}$), and (2)
if $\phi\not\in \SAT$, then no collection of at most $k\cdot[(1-\delta)\ln p]$ sets
covers the universe (in other words, no set of 
at most $k\cdot[(1-\delta)\ln p]$ columns of $B$ has a sum which is coordinate-wise at
least $\bfe^{(m)}$).
This is already enough to establish that any polynomial-time algorithm for $(g(p), 0)$-{\sc Sparse Regression} implies an 
$N^{O(\log \log N)}$-time algorithm for \SAT. The remainder of the proof is devoted
to establishing the analogous statement for $(g(p), m^{1-\delta})$-{\sc Sparse Regression} .

Build a matrix $B'$ by stacking $r$ copies of $B$ atop one another, $r$ to be 
determined later.    Let $M=rm$.  
If $\phi\in \SAT$, then there is a
collection of $k$ columns summing to $\bfe^{(M)}$.  This is a
linear combination of sparsity at most $k$.
If $\phi\not \in \SAT$, then for any linear combination of at most $k\cdot \left [ (1-\delta)\ln p\right]$ column vectors, in each of the $r$ copies of $B$, there is squared error 
at least 1 (since the best one could hope for, in the absence of a set cover, is $m-1$ 1's and one $0$).
This means that the squared error overall is at least $r$.  We want $r> M^{1-\delta}=(rm)^{1-\delta}$, i.e., $r^\delta>m^{1-\delta}$, and hence we define $r=\lceil 
m^{1/\delta-1}\rceil+1$.  

Construct an algorithm $A'$ for \SAT\ as follows.  
Run $A$ on instance $(B',k)$ of \gSR\  for $T(N)$ time steps, where $T(N)$ is the time $A$ would need on 
an $rm \times p$ matrix if $(B',k)$ were a valid input for \gSR\  (which it may not be);  
since in the valid case, $A$ runs in time polynomial in the size of the input matrix $B'$ (which is $N^{O(\log\log N)}$), 
$T(N)$ is also $N^{O(\log\log N)}$.
If $A$ outputs a vector $\bfx$ such that $\|B' \bfx - \bfe^{(M)}\|^2 \leq r$
and $\|\bfx\|_0 \leq k [(1-\delta) \ln p]$ within this time bound, then $A'$ outputs ``Satisfiable.'' Otherwise ($A$ doesn't terminate in the allotted time or 
it terminates and outputs an inappropriate vector), $A'$ outputs ``Unsatisfiable.''

Clearly $A'$ runs in time $N^{O(\log \log N)}$ on inputs $\phi$ of size $N$. 
It remains to show that $A'$
is a correct algorithm for \SAT. 

To this end, suppose that $\phi \in \SAT$. In this case, there is a solution $\bfx^*$ of sparsity at most $k$ with $B'\bfx^*=\bfe^{(M)}$,
and since $A$ is a correct algorithm for \gSR, $A$ will find such a solution, causing $A'$ to output ``Satisfiable'' when run on $\phi$.
On the other hand, if $\phi \not \in \SAT$, then there is no vector $\bfx^*$ with $\|\bfx^*\|_0 \leq k\cdot \left [ (1-\delta)\ln p\right]$
such that $\|B' \bfx^* - \bfe^{(M)}\|^2 \leq r$. Hence, $A'$ must output ``Unsatisfiable'' when run on $\phi$. 
We conclude that $A'$ is a correct algorithm for \SAT\ running in time $N^{O(\log \log N)}$ on instances of size $N$.
\end{proof}

One can combine Dinur and Steurer's PCP \cite{steurer} with Feige's construction, or the earlier
construction of Lund and Yannakakis \cite{lund}, 
to strengthen the conclusion of Theorem \ref{fullthm0} to $\SAT\in $P.

Throughout, $\scBPTime(T)$ will denote the set of all languages
decidable by randomized algorithms, with two-sided error, running in expected time $T$.
Our main result is that unless $NP\subseteq\scBPTime(n^{\polylog(n)})$, then even if $g(p)$ grows at a ``nearly polynomial'' rate, and $h(m, p) \leq p^{C_1} \cdot m^{1-C_2}$ for any positive constants $C_1, C_2$, there is no quasipolynomial-time (randomized) algorithm for \gSR:

\begin{theorem}\label{fullCSthm}
Assume that $\scNP \not \subseteq \scBPTime(n^{\polylog(n)})$.
For any positive constants $\delta, C_1, C_2$, there exist a
$g(p)$ in $2^{\Omega(\lg^{1-\delta}(p))}$ and an $h(m, p)$ in $\Omega\left(p^{C_1} \cdot m^{1-C_2}\right)$ 
such that there is no quasipolynomial-time randomized algorithm for \gSR.
\end{theorem}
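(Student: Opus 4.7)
The plan is to follow the template of Theorem \ref{fullthm0} but to replace Feige's $(1-\delta)\ln p$-inapproximability of Set Cover with the much stronger $2^{\log^{1-\delta'} p}$-inapproximability known under the hypothesis $\scNP \not\subseteq \scBPTime(n^{\polylog(n)})$. Specifically, I would invoke (or re-assemble from the PCP theorem, Raz's parallel repetition on Label Cover, and the Lund--Yannakakis/Feige reduction to Set Cover) a randomized quasi-polynomial-time reduction from \SAT\ of size $N$ to a \emph{disjoint} Set Cover instance on an $m$-element universe and a family of $p$ sets, with the gap that in the YES case $k$ pairwise disjoint sets cover the universe, and in the NO case no $k \cdot 2^{\log^{1-\delta'} p}$ sets suffice. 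The parallel-repetition step is what forces both the reduction and the matching complexity hypothesis to be quasi-polynomial.

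Given such a reduction, I apply the stacking trick exactly as in Theorem \ref{fullthm0}: let $B$ be the $m \times p$ incidence matrix of the Set Cover instance, and let $B'$ be the $M \times p$ matrix obtained by stacking $r$ copies of $B$, where $M = rm$. In the YES case, the disjoint cover yields a $k$-sparse $\bfx^*$ with $B'\bfx^* = \bfe^{(M)}$ exactly. In the NO case, any $\bfx$ with $\|\bfx\|_0 \le k \cdot 2^{\log^{1-\delta'} p}$ must fail to cover at least one element in each of the $r$ copies of $B$, contributing squared error at least $1$ per copy and hence at least $r$ in total. Choosing $r$ to satisfy $r \ge p^{C_1}(rm)^{1-C_2}$, which reduces to $r = \Theta\bigl(p^{C_1/C_2} m^{(1-C_2)/C_2}\bigr)$ and is polynomial in $m,p$, ensures $r > h(M,p) = p^{C_1} M^{1-C_2}$. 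Since $M$ is polynomial in $m,p$, we have $\log M = \Theta(\log p)$, so $2^{\log^{1-\delta'} p}$ can be rewritten as $2^{\Omega(\log^{1-\delta} M)}$ by taking any $\delta' < \delta$, meeting the required sparsity factor $g(p) \in 2^{\Omega(\log^{1-\delta} p)}$.

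The resulting randomized \SAT\ algorithm $A'$ runs the hypothesized quasi-polynomial randomized algorithm $A$ for \gSR\ on $(B',k)$ and answers ``Satisfiable'' iff $A$ returns an $\bfx$ that meets both the sparsity and squared-error thresholds. Correctness follows exactly as in Theorem \ref{fullthm0}: in the YES case such an $\bfx$ exists and $A$ must output one with high probability; in the NO case no such $\bfx$ even exists. Since the reduction is randomized quasi-polynomial and $A$ is randomized quasi-polynomial, $A'$ decides \SAT\ in randomized quasi-polynomial time, contradicting the hypothesis $\scNP \not\subseteq \scBPTime(n^{\polylog(n)})$.

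The main obstacle is not the stacking---which is essentially identical to Theorem \ref{fullthm0}---but citing or deriving the right Set Cover inapproximability: I need the super-logarithmic $2^{\log^{1-\delta'} p}$ factor, the \emph{disjoint}-cover property in the YES case so that $B\bfx^* = \bfe^{(m)}$ holds with equality (not merely coordinate-wise $\ge$), and a reduction that is \emph{randomized} quasi-polynomial-time so as to match the stated hypothesis. Additional bookkeeping is needed to absorb both the polynomial stacking blow-up and the quasi-polynomial reduction blow-up into the final dimensions of $B'$ while certifying that $g$ and $h$ still lie in the claimed asymptotic classes $2^{\Omega(\log^{1-\delta} p)}$ and $\Omega(p^{C_1} m^{1-C_2})$, respectively.
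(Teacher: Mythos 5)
There is a genuine gap, and it is located exactly where you flag your ``main obstacle'': the Set Cover inapproximability you want to invoke does not exist and cannot exist. The greedy algorithm approximates {\sc Set-Cover} within a factor $1+\ln m$ in polynomial time, so no reduction (deterministic or randomized, polynomial or quasipolynomial) can map \SAT\ to instances with a YES/NO gap of $k$ versus $k\cdot 2^{\log^{1-\delta'}p}$ sets: if such a reduction existed, running greedy on the produced instance would decide \SAT\ in randomized quasipolynomial time, contradicting the very hypothesis $\scNP \not\subseteq \scBPTime(n^{\polylog(n)})$ that you are assuming (and known hardness results for {\sc Set-Cover} top out at $(1-o(1))\ln m$ for precisely this reason). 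So the plan ``Theorem \ref{fullthm0} with a bigger Set Cover gap'' cannot be carried out; the stacking step (which is essentially the paper's Proposition \ref{fullthm:transform}) is fine, but the super-logarithmic sparsity gap cannot be inherited from a covering problem.

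The paper escapes this barrier by proving a different gap statement, Proposition \ref{fullthm:weak}, whose NO-case guarantee is not ``no small sub-family covers the universe'' but ``no $k\sigma$-sparse \emph{real linear combination} of columns lies within squared distance $\Delta$ of $\bfe^{(m)}$.'' This is established by a direct reduction from a canonical two-prover MIP (PCP theorem plus Raz parallel repetition), following the Lund--Yannakakis construction but replacing their set systems with the new notion of $\Delta$-useful set systems (Definition \ref{fulldef:setsystem}, Lemma \ref{fulllemma:setsystem}): random set systems over a base set $S$ in which no $\ell$-sparse linear combination of the sets' indicator vectors and their complements is $\Delta$-close to the all-ones vector unless some set and its complement both participate. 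Because this linear-algebraic requirement is weaker than the covering requirement of Lund--Yannakakis, the paper can take $|S|=O(\ell^2\log M)$ --- polynomial in $\ell$ and logarithmic in $M$ --- rather than exponential in $\ell$, and it is exactly this quantitative improvement that breaks the logarithmic barrier and yields $\sigma,\Delta \in 2^{\Omega(\log^{1-\delta}p)}$. The soundness argument then converts any sparse approximate solution into a prover strategy with noticeable acceptance probability (Sublemmas \ref{fulllemma:good}--\ref{fulllemma:claim2}), which is a different mechanism from counting uncovered elements. Your proposal is missing this entire ingredient, and no citation can substitute for it, so the argument as written does not go through.
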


Note that the ``$-C_2$'' cannot be removed from the condition $h(m, p) \leq p^{C_1}
\cdot m^{1-C_2}$ in Theorem \ref{fullCSthm}: 
the algorithm that always outputs the all-zeros vector solves \gSR\ for $g(p)=0$ and $h(m, p)= m$. 

We also show, assuming a slight strengthening of a standard
conjecture---known as the projection games conjecture (PGC)
\cite{projectiongames}---about the existence of
probabilistically checkable proofs with small soundness error,
that \gSR\ is hard even if $g$ grows as a constant power. We refer to
our slight strengthening of the PGC as the ``Biregular PGC,'' and 
state it formally in Section \ref{fullsec:projection}.

\begin{theorem}\label{fullCSthmstrong}
Assuming the Biregular PGC, the following holds: If $\scNP \not
\subseteq \scBPP$, then for any positive constants $C_1, C_2$, there 
exist a $g(p)$ in $p^{\Omega(1)}$ and an $h(m, p)$ in $\Omega\left(p^{C_1} \cdot m^{1-C_2}\right)$ such that there is no polynomial-time randomized algorithm for \gSR. \end{theorem}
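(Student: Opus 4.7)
The plan is to mimic the two-stage strategy used in Theorem~\ref{fullthm0}: (i) reduce SAT in randomized polynomial time to a Sparse Regression instance whose set-cover-like gap in $k$ is polynomial rather than logarithmic, then (ii) stack copies of the resulting matrix vertically to amplify the allowable squared error from $\Omega(1)$ per copy up to the target $p^{C_1}M^{1-C_2}$.

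For stage~(i), I would invoke the Biregular PGC to obtain, in randomized polynomial time, a projection-game (label-cover) instance, biregular on both sides, with soundness $N^{-\gamma}$ for some constant $\gamma=\gamma(C_1,C_2)>0$. Composing this with a suitable partition-system gadget (in the style of Lund--Yannakakis/Feige/Moshkovitz) produces an $m\times p$ Boolean matrix $B$ and an integer $k$ such that in the YES case there exist $k$ pairwise disjoint columns of $B$ summing to $\bfe^{(m)}$, while in the NO case no collection of $k\cdot p^{\epsilon}$ columns covers every row of $B$, for some $\epsilon=\epsilon(\gamma)>0$. Biregularity is what lets the partition-system reduction preserve the disjoint-cover promise in the YES case, while the $N^{-\gamma}$ soundness is what the partition system amplifies into a $p^{\epsilon}$-factor cover-number gap (rather than the $(1-o(1))\ln p$ factor one gets from constant-soundness PCPs).

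For stage~(ii), stack $r$ copies of $B$ vertically to form an $M\times p$ matrix $B'$ with $M=rm$, where
\[r \;=\; \lceil p^{C_1/C_2}\, m^{(1-C_2)/C_2}\rceil + 1,\]
so that $r^{C_2}>p^{C_1}m^{1-C_2}$ and hence $r>p^{C_1}M^{1-C_2}$; this $r$ is polynomial in $m,p$, so the overall reduction remains polynomial time. In the YES case the $k$-sparse witness $\bfx^*$ satisfies $B'\bfx^*=\bfe^{(M)}$, giving zero squared error. In the NO case, for every $\bfx$ with $\|\bfx\|_0\le k\cdot p^{\epsilon}$, some row $i$ of $B$ is identically zero across the support of $\bfx$, so $(B\bfx)_i=0$ irrespective of the real coefficients on that support; each of the $r$ stacked copies of row~$i$ contributes at least $1$ to $\|B'\bfx-\bfe^{(M)}\|^2$, yielding total squared error $\ge r>p^{C_1}M^{1-C_2}$.

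Setting $g(p):=p^{\epsilon}\in p^{\Omega(1)}$ and $h(M,p):=p^{C_1}M^{1-C_2}$, any polynomial-time randomized algorithm for $(g,h)$-\SR\ applied to $(B',k)$ would distinguish YES from NO with high probability, placing $\scNP$ in $\scBPP$ and contradicting the hypothesis. The main obstacle is stage~(i): deriving a polynomial-factor cover-number gap from the Biregular PGC while \emph{simultaneously} preserving the promise that YES instances admit a cover by $k$ mutually disjoint sets. The disjointness is essential because it forces the YES-case residual to be exactly $0$, which is what allows the stacking argument to yield a clean YES/NO separation at arbitrarily high $h(m,p)$.
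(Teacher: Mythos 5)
Your stage (ii) is fine and is essentially the paper's Proposition \ref{fullthm:transform} run in the reduction direction, but stage (i) is not merely ``the main obstacle''---in the form you state it, it is unobtainable, and this is precisely the issue the paper is built to circumvent. You ask for a polynomial-time reduction producing a Boolean matrix $B$ (with $m,p$ polynomial in $N$) and a $k$ such that YES instances admit $k$ disjoint columns summing to $\bfe^{(m)}$ while NO instances admit no collection of $k\cdot p^{\epsilon}$ columns covering all rows. Such a reduction cannot exist unless $\scNP\subseteq\scBPP$ already: the greedy set-cover algorithm always finds, in the YES case, a cover of size at most $k(\ln m+1)\ll k\cdot p^{\epsilon}$, so simply running greedy and comparing its output size to $k\cdot p^{\epsilon}$ would decide \SAT\ in (randomized) polynomial time. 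Equivalently, the cover-number gap of any polynomial-size instance is capped at $O(\log m)$, no matter how small the soundness of the projection game is. The same obstruction shows up structurally: Lund--Yannakakis-type partition systems with \emph{covering} soundness require the gadget universe $S$ to be exponential in the gap parameter $\ell$, so taking $\ell=p^{\epsilon}$ forces superpolynomially many rows. Your final error bound also depends on this covering statement (``some row of $B$ is identically zero across the support of $\bfx$''), so the NO-case analysis collapses together with stage (i).

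The paper escapes this by never passing through a set-cover statement. Its soundness condition is linear-algebraic: a \emph{$\Delta$-useful} set system (Definition \ref{fulldef:setsystem}) only requires that no $\ell$-sparse \emph{real} linear combination of the indicator vectors and their complements lie within squared distance $\Delta=\Omega(|S|)$ of $\bfe^{(|S|)}$ unless some $\bfv_i$ and $\bar\bfv_i$ both participate; random sets achieve this with $|S|=O(\ell^2\log M)$ (Lemma \ref{fulllemma:setsystem}), i.e.\ polynomial in $\ell$, which is what makes a gap $\sigma\approx\ell/2=p^{\Omega(1)}$ compatible with polynomial-size matrices. The NO case is then proved not by exhibiting an uncovered row but by converting any sparse $\bfx$ with $B\bfx\in\bfB_{\Delta}(\bfe^{(m)})$ into prover strategies whose acceptance probability exceeds the MIP soundness (Proposition \ref{fullprop:sound}); this is where functionality, uniformity (biregularity), and equality of question-space sizes are actually used---biregularity enters the counting argument relating $\sum_r \cost(r)$ to $\|\bfx\|_0$, not the YES-case disjointness, which follows from functionality and $\bfv+\bar\bfv=\bfe$. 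So the correct repair of your outline is to replace your covering-gap target in stage (i) by the linear-combination-distance guarantee of Proposition \ref{fullthm:strongformal}, after which your stacking step goes through unchanged.
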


We might consider \gSR\ to be a ``computer science'' version of Sparse Regression, in the sense that the data are deterministic and fully specified.
In an alternative, ``statistics" version of Sparse Regression, the data are corrupted by random noise unknown to the algorithm. 
Specifically we consider the  following problem, which we call \nSR.  


\begin{itemize}
\item
There are a positive integer $k$ and an $m\times p$ Boolean matrix $B$, such that there exists an unknown $p$-dimensional vector $\bfx^*$ with $\|\bfx^*\|_0\le k$ such that
$B\bfx^*=\bfe^{(m)}$.  
An $m$-dimensional vector $\bfeps$ 
of i.i.d. $N(0,1)$ ``noise'' components $\epsilon_i$ is generated and $\bfy$ is set to $B\bfx^*+\bfeps=\bfe^{(m)}+\bfeps$.  $B$, $k$, and $\bfy$ (but not $\bfeps$ or $\bfx^*$) are revealed to the algorithm.  
\item
Goal:  Output a (possibly random) $\bfx\in \reals^p$ such that
$E[\|B(\bfx-\bfx^*)\|^2]\le h(m, p)$  and $\|\bfx\|_0\le k\cdot g(p)$. Here, the expectation is taken over both the internal randomness of the algorithm and of the
$\epsilon_i$'s.
\end{itemize}

We give a simple reduction from \gSR\ to \nSR\ that proves the following theorems.
\begin{theorem}\label{fullstatthm}
Assume that $\scNP \not \subseteq \scBPTime(n^{\polylog(n)})$.
For any positive constants $\delta, C_1, C_2$, there exist a
$g(p)$ in $2^{\Omega(\log ^{1-\delta}(p))}$ and an $h(m, p)$ in
$\Omega(p^{C_1} \cdot m^{1-C_2})$ such that there is no
quasipolynomial-time randomized algorithm for \nSR.
\end{theorem}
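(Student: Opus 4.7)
The plan is to reduce \gSR\ to \nSR\ by a simple noise-injection step and then invoke Theorem~\ref{fullCSthm}. Given a \gSR\ instance $(B,k)$ with unknown planted $\bfx^*$ of sparsity at most $k$ satisfying $B\bfx^* = \bfe^{(m)}$, I would draw a fresh Gaussian vector $\bfeps \sim N(0, I_m)$, set $\bfy := \bfe^{(m)} + \bfeps = B\bfx^* + \bfeps$, and feed $(B, k, \bfy)$ to a hypothetical quasipolynomial-time randomized \nSR\ algorithm $\alg$. From $\alg$'s viewpoint this is a perfectly legitimate \nSR\ instance with the same planted $\bfx^*$, so $\alg$ returns some $\bfx$ with $\|\bfx\|_0 \le k \cdot g(p)$ and $E\bigl[\|B(\bfx-\bfx^*)\|^2\bigr] \le h(m,p)$, where the expectation is over both $\bfeps$ and $\alg$'s internal coins.

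The key observation is the identity $B(\bfx - \bfx^*) = B\bfx - \bfe^{(m)}$, so the expected noisy error on the \nSR\ side is literally the expected exact error of $\bfx$ on the original \gSR\ side. A single application of Markov's inequality then gives $\|B\bfx - \bfe^{(m)}\|^2 \le 2 h(m,p)$ with probability at least $1/2$ over the combined randomness. Since both $\|\bfx\|_0$ and $\|B\bfx - \bfe^{(m)}\|^2$ are efficiently computable from the returned $\bfx$, I would amplify this success probability to any desired constant, or to $1 - 1/\mathrm{poly}(p)$, simply by repeating the construction with $O(\log p)$ fresh noise draws and independent runs of $\alg$, and returning any output that passes both the sparsity and the error checks.

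The net effect is a quasipolynomial-time randomized algorithm for \gSR\ with sparsity parameter $g(p)$ and error tolerance $2\,h(m,p)$. Taking the contrapositive, starting from any $(g_0, h_0)$ shown hard for \gSR\ by Theorem~\ref{fullCSthm}, the pair $(g_0, h_0/2)$ must be hard for \nSR. Dividing $h_0$ by $2$ leaves it in the same asymptotic class $\Omega(p^{C_1} m^{1-C_2})$ and $g_0$ unchanged in $2^{\Omega(\log^{1-\delta} p)}$, so the parameter ranges claimed in Theorem~\ref{fullstatthm} match precisely, and the complexity assumption $\scNP \not\subseteq \scBPTime(n^{\polylog(n)})$ transfers through the reduction. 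I do not anticipate a substantive obstacle; the only slight care needed is in tracking the quantifiers---confirming that preserving the asymptotic ranges of $g$ and $h$ across the reduction justifies invoking Theorem~\ref{fullCSthm} for the same $\delta, C_1, C_2$---and, if the sparsity bound in the definition of \nSR\ is interpreted only in expectation rather than pointwise, applying a second Markov step at the cost of an absorbed constant factor in $g$.
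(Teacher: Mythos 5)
Your proposal is correct and follows essentially the same route as the paper: the paper's Proposition \ref{fullthm:stattocs} performs exactly this reduction---generate $\bfeps \sim N(0,I_m)$, set $\bfy = \bfe^{(m)}+\bfeps$, run the \nSR\ algorithm, use $B(\bfx-\bfx^*)=B\bfx-\bfe^{(m)}$, apply Markov to get per-trial success probability $1/2$, and amplify by repeating with an efficient sparsity/error check---and then combines it with Theorem \ref{fullCSthm}, absorbing the factor-of-two loss in $h$ just as you do. (Your worry about an in-expectation sparsity bound is moot, since the paper's definition of \nSR\ requires $\|\bfx\|_0 \le k\cdot g(p)$ pointwise.)
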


\begin{theorem}\label{fullstatthmstrong}
Assuming the Biregular PGC, the following holds. If $\scNP \not \subseteq \scBPP$, then for any positive constants $C_1, C_2$, there 
exist a $g(p)$ in $p^{\Omega(1)}$  and an $h(m, p)$ in
$\Omega\left(p^{C_1} m^{1-C_2}\right)$ such that there is no
polynomial-time randomized algorithm for \nSR.
\end{theorem}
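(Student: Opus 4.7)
The plan is to derive Theorem \ref{fullstatthmstrong} directly from Theorem \ref{fullCSthmstrong} via a simple reduction from \gSR\ to \nSR. Fix positive constants $C_1, C_2$, and let $g(p) \in p^{\Omega(1)}$ and $h(m,p) \in \Omega(p^{C_1} m^{1-C_2})$ be the functions guaranteed by Theorem \ref{fullCSthmstrong}. I will show that an efficient randomized algorithm for \nSR\ with parameters $(g, h/2)$ yields an efficient randomized algorithm for \gSR\ with parameters $(g, h)$, transferring the hardness by contrapositive.

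Given a \gSR\ instance $(B, k)$, the reduction samples $\bfeps$ with i.i.d. $N(0,1)$ coordinates, forms $\bfy := \bfe^{(m)} + \bfeps$, and feeds $(B, k, \bfy)$ to the hypothetical \nSR\ algorithm $A$. Because the \gSR\ promise supplies a $k$-sparse $\bfx^*$ with $B\bfx^* = \bfe^{(m)}$, the triple $(B, k, \bfy)$ is distributed exactly as a valid \nSR\ instance with planted solution $\bfx^*$. Let $\bfx$ denote $A$'s output. Then $\|\bfx\|_0 \le k \cdot g(p)$ holds deterministically, and using the identity $B\bfx - \bfe^{(m)} = B(\bfx - \bfx^*)$,
\[
E\bigl[\|B\bfx - \bfe^{(m)}\|^2\bigr] = E\bigl[\|B(\bfx - \bfx^*)\|^2\bigr] \le h(m,p)/2,
\]
where the expectation is over $\bfeps$ and the internal coins of $A$. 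Markov's inequality then shows that with probability at least $1/2$ the output satisfies $\|B\bfx - \bfe^{(m)}\|^2 \le h(m,p)$, which is precisely the \gSR\ guarantee.

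Because $\|B\bfx - \bfe^{(m)}\|^2$ and $\|\bfx\|_0$ are both polynomial-time computable, we can check whether the \gSR\ constraints are met and, if not, rerun the reduction with fresh noise; $O(\log p)$ independent trials amplify the success probability to $1 - 1/\poly(p)$, matching the high-probability convention in the definition of \gSR. The overall procedure runs in polynomial time, preserves $g(p)$ exactly, and loses only a constant factor in $h(m,p)$, so both $p^{\Omega(1)}$ and $\Omega(p^{C_1} m^{1-C_2})$ are maintained. I do not anticipate a substantive obstacle: the only subtlety is converting the in-expectation guarantee of \nSR\ into the high-probability guarantee required by \gSR, and Markov combined with efficient verification handles this cleanly. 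The analogous reduction paired with Theorem \ref{fullCSthm} would give Theorem \ref{fullstatthm} in the same way.
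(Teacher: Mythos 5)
Your proposal is correct and follows essentially the same route as the paper: the paper proves exactly this reduction (its Proposition \ref{fullthm:stattocs}, which converts a $(g,h/2)$-\nSR\ algorithm into a \gSR\ algorithm by sampling $\bfeps$, setting $\bfy=\bfe^{(m)}+\bfeps$, applying Markov's inequality, and repeating with verification to amplify the success probability) and then combines it with Theorem \ref{fullCSthmstrong}. The only cosmetic difference is that you amplify over $O(\log p)$ trials to reach $1-1/\poly(p)$ while the paper uses $\lceil\log(1/\delta)\rceil$ trials for an arbitrary failure parameter $\delta$, which is immaterial.
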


\paragraph{Importance and Prior Work.}
Variable selection is a crucial part of model design in statistics.  People want a model with a small number of variables partially for simplicity
and partially because models with fewer variables tend to have smaller generalization error, that is, they give better predictions on test data
(data not used in generating the model).  Standard greedy
statistical algorithms to choose features for linear regression include forward
stepwise selection (also known as stepwise regression or
orthogonal least squares),
backward elimination, and least angle regression. Standard non-greedy feature selection algorithms include LASSO and ridge regression.  

There are algorithmic results for sparse linear regression that guarantee good performance under certain conditions on the matrix $B$.  For example, 
equation (6) of \cite{ZWJ} states that a ``restricted eigenvalue condition" implies that the LASSO algorithm will give good performance for the statistical version of the problem.  
A paper by Natarajan \cite{Natarajan} presents an algorithm, also known as
forward stepwise selection or orthogonal least squares (see also
\cite{BD}), which achieves good performance
provided that
the $L_2$-norm of the pseudoinverse of the matrix obtained from $B$ by normalizing each column is small \cite{Natarajan}.
It appears that no upper bound for stepwise regression under
reasonable assumptions on the matrix was known prior to the
appearance of Natarajan's algorithm in 1995 (and the equivalence of
Natarajan's algorithm with forward stepwise selection appears not to
have been noticed until recently).
In the appendix, we include
an example proving that Natarajan's algorithm can perform badly
when the $L_2$-norm of that pseudoinverse is large, or  in other
words, that Natarajan's analysis of his algorithm is close to
tight.  Another example proving necessity of the factor involving the pseudoinverse
appeared in \cite[p. 10]{chen}.

There have been several prior works establishing hardness results for 
variants of the sparse regression problem. 
Natarajan \cite{Natarajan} used a reduction from {\sc Exact Cover By 3-Sets} to prove that,
 given an $m \times p$ matrix $A$, a vector $b \in \mathbb{R}^m$, and $\eps > 0$,
 it is NP-hard to compute a vector $x$ satisfying $\|Ax-b\| < \eps$ if such an $x$ exists, such that $x$ has the fewest
nonzero entries over all such vectors. Davis et al. \cite{davisetal} proved a similar NP-hardness result. 
The hardness results of \cite{Natarajan, davisetal} only establish hardness if the algorithm is
not allowed to ``cheat'' simultaneously on both the sparsity and accuracy of the resulting linear combination.

Arora et al. \cite{aroraetal} showed that, for any $\delta > 0$, a problem called \textsc{Min-Unsatisfy} does not have any polynomial-time algorithm achieving an
approximation factor of $2^{\log^{1-\delta}(n)}$, assuming $\textsc{NP} \not \subseteq \textsc{DTime}(n^{\polylog(n)})$. 
In this problem, the algorithm is given a system $Ax=b$ 
of linear equations over the rationals, and  the cost of a solution $x^*$ is the number of equalities that are violated by $x^*$. Amaldi and Kann \cite{amaldi} 
built directly on the result of Arora et al. \cite{aroraetal} to show, in our terminology, that $(2^{\log^{1-\delta}(n)}, 0)$-\textsc{Sparse-Regression} also has no polynomial-time algorithm
under the same assumption. Also see a result by S. Muthukrishnan
\cite[pp. 20-21]{muthu}. 

Finally, Zhang et al. \cite{ZWJ} showed a hardness result for \nSR.
We defer a discussion of the result
of \cite{ZWJ} to Section \ref{fullnoisymotivation}.  For now, we just note
that their hardness only applies to algorithms which cannot
``cheat'' on the sparsity, that is, to algorithms that must
generate a solution with at most $k$ nonzeros.

In summary, to the best of our knowledge, our work is the first to establish that 
sparse linear regression is hard to approximate, even when the algorithm
is allowed to ``cheat'' on \emph{both} the sparsity of the solution output and 
on the accuracy of the resulting linear combination.

\eat{
In light of the importance of variable (i.e., feature)
selection, it seems odd that so little is known regarding the computational complexity of finding
the best set of features.   Statisticians do not usually study the performance of their variable-selection algorithms in the worst case.  To our knowledge, the 
only paper giving a hardness result for a version of sparse
linear regression is \cite{ZWJ}. Their theorem involves the 
statistical version of the problem we call \nSR.  
Since we haven't yet defined the problem, we defer a discussion of the result
of \cite{ZWJ} to Section \ref{fullnoisymotivation}.  For now, we just note
that their hardness (unlike ours) only applies to algorithms which cannot
``cheat'' on the sparsity, that is, to algorithms that must
generate a solution with at most $k$ nonzeros.
}


\section{Proofs of Theorems \ref{fullCSthm} and \ref{fullCSthmstrong}}

\subsection{Notation and Proof Outline}
\label{fullsec:proofoverview}

Throughout, we will use lower-case boldface letters to denote vectors.
For any vector $\bfy \in \reals^m$, $\|\bfy\|$ will denote the Euclidean norm of $\bfy$, while
$\|\bfy\|_0$ will denote the sparsity (i.e., the number of
nonzeros) of $\bfy$. Let $\bfe^{(m)} \in \reals^m$ denote the
all-ones vector, and for any vector $\bfb$, let
$\bfB_{\Delta}(\bfb) = \{\bfb': \|\bfb - \bfb'\|^2 \leq
\Delta\}$ denote the ball of radius $\Delta$ around $\bfb$ in
the {\em square} of the Euclidean norm. 

If $\bfy = \sum_i c_i \bfw_i$ represents a vector $\bfy$ as a linear combination of vectors $\bfw_i$, we say that $\bfw_i$
\emph{participates} in
the linear combination if $c_i \neq 0$. 

We will use the symbol $N$ to denote the input size of $\SAT$ instances used in our reductions. 

The first step in our proof of Theorem \ref{fullCSthm} is to establish Proposition \ref{fullthm:weak} below.

\begin{proposition}
\label{fullthm:weak}
If $\SAT \not \in \scBPTime(N^{\polylog(N)})$, then for any constant $\delta < 1$,
there are a polynomial $m=m(p)$, a $k=k(p)$, and a pair 
$\sigma=\sigma(p), \Delta=\Delta(p)$ of values both in $2^{\Omega\left(\log^{1-\delta}(p)\right)}$,
such that no quasipolynomial-time randomized algorithm distinguishes the following two cases, given 
an $m \times p$ Boolean matrix $B$:

\begin{enumerate}
\item There is an $\bfx \in \{0, 1\}^p$ such that $B\bfx=\bfe^{(m)}$ and $\|\bfx\|_0 \leq k$.
\item For all $\bfx \in  \mathbb{R}^p$ such that $B\bfx \in
\bfB_{\Delta}(\bfe^{(m)})$, $\|\bfx\|_0 \geq k \cdot
\sigma$. 
\end{enumerate}
\end{proposition}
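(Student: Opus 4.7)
The plan is to prove the proposition by reducing from $\SAT$ in two stages, mirroring the structure of the warmup proof of Theorem \ref{fullthm0} but building on a much stronger base hardness. The first stage produces a Boolean $m_0 \times p$ matrix $B_0$ (with $m_0$ polynomial in $p$) and a sparsity parameter $k$ such that in the YES case there is $\bfx \in \{0,1\}^p$ with $B_0 \bfx = \bfe^{(m_0)}$ and $\|\bfx\|_0 \leq k$, and in the NO case every $\bfx \in \reals^p$ with $\|\bfx\|_0 < k\sigma$ satisfies $\|B_0 \bfx - \bfe^{(m_0)}\|^2 \geq 1$, where $\sigma \in 2^{\Omega(\log^{1-\delta} p)}$. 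The second stage vertically stacks $r \in 2^{\Omega(\log^{1-\delta} p)}$ copies of $B_0$ into a matrix $B$ with $M = r m_0$ rows; the YES case is preserved verbatim, while in the NO case each copy contributes squared error at least $1$, so $\|B \bfx - \bfe^{(M)}\|^2 \geq r$. Setting $\Delta := r/2$ then separates the two cases.

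For the base hardness I would invoke the Amaldi--Kann \cite{amaldi} result, which extends Arora et al.'s \cite{aroraetal} $2^{\log^{1-\delta} n}$-hardness of \textsc{Min-Unsatisfy} to $(2^{\log^{1-\delta} n}, 0)$-\textsc{Sparse Regression} under the deterministic analogue of our hypothesis. Three points deserve attention. First, the reductions have quasipolynomial blowup, so composing them with a hypothetical quasipolynomial-time distinguishing algorithm still yields a quasipolynomial-time (randomized) $\SAT$ algorithm, contradicting $\SAT \not\in \scBPTime(N^{\polylog N})$; the step from deterministic to randomized hypothesis is transparent since the reductions themselves are deterministic. Second, as noted in the paper's introduction, the hardness is preserved when one restricts $B_0$ to be Boolean with target $\bfe^{(m_0)}$, so we may assume this normalized form. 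Third and most important is \emph{robustness} of the NO case: we must rule out not just exact sparse solutions, but sparse solutions with squared error below $1$. This is the property that Set-Cover--style reductions automatically provide, since any row not covered by the support of $\bfx$ contributes exactly $1$ to $\|B_0 \bfx - \bfe^{(m_0)}\|^2$ regardless of the real coefficients chosen; I would therefore arrange for the reduction to pass through a Set-Cover--like intermediate whose incidence matrix inherits the $2^{\log^{1-\delta} p}$ gap from the PCP-based source.

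I expect the main obstacle to be obtaining this robust base hardness at the strong sparsity gap $\sigma = 2^{\Omega(\log^{1-\delta} p)}$; the Amaldi--Kann--Arora et al.\ pipeline was designed to produce an exact-equality sparsity gap, so verifying that a Set-Cover--style intermediate with matching gap can be extracted (or equivalently, that the NO instance admits no sub-threshold-sparse approximate solution with squared error $<1$) is the only nontrivial technical step. Once this is in hand, the stacking step is precisely the $\ell_2^2$-amplification already carried out in Theorem \ref{fullthm0}, and the parameter bookkeeping is routine: since $r$ is subpolynomial in $p$ while $m_0$ is polynomial in $p$, $M = r m_0$ remains polynomial in $p$, and $\sigma, \Delta$ both lie in $2^{\Omega(\log^{1-\delta} p)}$ as required.
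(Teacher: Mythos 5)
There is a genuine gap, and it sits exactly where you locate it: the ``only nontrivial technical step'' you defer is the entire content of the proposition, and neither of your proposed sources can supply it. The Amaldi--Kann result \cite{amaldi} (built on Arora et al.\ \cite{aroraetal}) is a hardness result for $(2^{\log^{1-\delta} n},0)$-\textsc{Sparse Regression}, i.e.\ it only rules out sparse \emph{exact} solutions; it gives no control over real vectors $\bfx$ with $\|B_0\bfx-\bfe^{(m_0)}\|^2<1$, nor does it come in the normalized Boolean/all-ones form (the paper's remark that this normalization ``only makes the result stronger'' refers to the hardness instances it constructs, not to a transformation you can apply to a cited black box). Your fallback, routing through ``a Set-Cover--like intermediate whose incidence matrix inherits the $2^{\log^{1-\delta}p}$ gap,'' cannot work for a genuine covering problem: the argument that an uncovered row contributes $1$ to the squared error regardless of coefficients indeed gives robustness (this is the warmup Theorem~\ref{fullthm0}), but it only applies when the NO case excludes small \emph{covers}, and Set Cover admits a greedy $O(\log m)$-approximation, so no instance family can have a partition-versus-cover gap of $2^{\log^{1-\delta}p}$. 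To exceed logarithmic gaps you must weaken the NO-case requirement from ``no small union covers $S$'' to ``no small real linear combination of the columns lands near $\bfe$,'' and establishing hardness under that weaker (non-monotone) condition is precisely what your proposal leaves unproved.

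The paper's proof supplies exactly this missing machinery: it starts from a canonical two-prover MIP for $\SAT$ with soundness error $2^{-\log^{c}N}$ (PCP plus Raz parallel repetition \cite{raz}), and replaces the Lund--Yannakakis set systems \cite{lund} by $\Delta$-useful set systems (Definition~\ref{fulldef:setsystem}, Lemma~\ref{fulllemma:setsystem}): random set systems over a universe of size only $O(\ell^2\log M)$ in which no $\ell$-sparse real linear combination of the indicator vectors and their complements lies within $\Delta=\Omega(|S|)$ of $\bfe$ unless some set and its complement both participate. The smallness of $|S|$ (polynomial in $\ell$, logarithmic in $M$) is what permits $\ell$, hence $\sigma$, to reach $2^{\Omega(\log^{1-\delta}p)}$, and the soundness analysis (Proposition~\ref{fullprop:sound}) converts any sparse vector with $B\bfx\in\bfB_{\Delta}(\bfe^{(m)})$ into prover strategies accepted with probability above $\eps_{\text{sound}}$ via the cost/good-$r$ bookkeeping. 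Your stage-2 stacking and the deterministic-reduction/randomized-hypothesis bookkeeping are fine (and essentially duplicate Proposition~\ref{fullthm:transform}), but without an argument replacing the $\Delta$-useful-set-system construction and its soundness extraction, the proposal does not prove the proposition.
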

(For the purpose of proving Theorem \ref{fullCSthm}, having $\Delta=1$ in Proposition
\ref{fullthm:weak} would actually suffice.)

The second step of the proof describes a simple transformation of any \gSR\ algorithm for a ``fast-growing'' function $h$ into 
a \gSR\ algorithm for $h(m, p)=1$.
The proof appears in Section \ref{fullprop7}.

\begin{proposition}\label{fullthm:transform}
Let $C_1, C_2$ be any positive constants. Let $\alg$ be an
algorithm for \gSR\ running in time $T(m, p)$, for some function
$g(\cdot)\ge 1$, and for $h(m, p) = p^{C_1} m^{1-C_2}$. Then there is an algorithm $\alg'$ for $(g, 1)$-{\sc Sparse Regression}
that runs in time $\poly(T(\poly(m, p), p))$.
\end{proposition}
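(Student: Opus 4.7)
The plan is to adapt the row-stacking construction already used in the proof of Theorem \ref{fullthm0}. Given an input $(B,k)$ to $(g,1)$-Sparse Regression, with $B$ an $m\times p$ Boolean matrix, algorithm $\alg'$ will form a taller $M\times p$ Boolean matrix $B'$ by stacking $r$ identical copies of $B$, where $M=rm$ and $r=r(m,p)$ is a polynomial to be fixed below. Since stacking simply duplicates the residual, for every $\bfx\in\reals^p$ one has
\[ \|B'\bfx-\bfe^{(M)}\|^2 \;=\; r\cdot\|B\bfx-\bfe^{(m)}\|^2; \]
in particular, any $k$-sparse $\bfx^*$ with $B\bfx^*=\bfe^{(m)}$ also satisfies $B'\bfx^*=\bfe^{(M)}$, so $(B',k)$ is a valid input to $\alg$.

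The algorithm $\alg'$ then runs $\alg$ on $(B',k)$ and returns the vector $\bfx$ it produces. With high probability, $\alg$ guarantees $\|\bfx\|_0\le k\cdot g(p)$ and
\[ \|B'\bfx-\bfe^{(M)}\|^2 \;\le\; h(M,p) \;=\; p^{C_1}(rm)^{1-C_2}, \]
which, on dividing by $r$, yields $\|B\bfx-\bfe^{(m)}\|^2 \le p^{C_1} m^{1-C_2}\, r^{-C_2}$. Choosing
\[ r \;=\; \bigl\lceil p^{C_1/C_2}\cdot m^{\max\{0,\,(1-C_2)/C_2\}} \bigr\rceil \]
forces the right-hand side to be at most $1$, so the same $\bfx$ is a valid output for the $(g,1)$-Sparse Regression instance $(B,k)$ (the sparsity bound $k\cdot g(p)$ depends only on $p$, which is unchanged). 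Because $C_1,C_2$ are constants, $r$ and therefore $M$ are polynomial in $m,p$, so the running time of $\alg'$ is $\poly(T(\poly(m,p),p))$, as required.

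The argument is essentially bookkeeping around the stacking identity, so I do not expect a substantive obstacle. The one place that deserves care is ensuring $r$ remains polynomial in $m,p$ uniformly across the regimes $C_2<1$, $C_2=1$, and $C_2>1$; this is handled by the $\max\{0,\cdot\}$ in the exponent on $m$, which prevents a spurious fractional negative power of $m$ from causing $r$ to drop below $1$ (or, in the other direction, from introducing a non-polynomial factor).
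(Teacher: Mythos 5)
Your proposal is correct and follows essentially the same route as the paper: stack $r$ polynomially many copies of $B$, use the identity $\|B'\bfx-\bfe^{(rm)}\|^2=r\|B\bfx-\bfe^{(m)}\|^2$, and choose $r$ large enough that $p^{C_1}m^{1-C_2}r^{-C_2}\le 1$. The paper's choice $r=\lceil (p^{C_1}m^{1-C_2})^{1/C_2}\rceil$ and your $\max\{0,\cdot\}$ variant are interchangeable bookkeeping, so there is nothing substantive to add.
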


\begin{proof}[Proof of Theorem \ref{fullCSthm} assuming Propositions \ref{fullthm:weak} and \ref{fullthm:transform}]
Suppose by way of contradiction that there are positive
constants $\delta, C_1, C_2$ such that there is a
quasipolynomial-time randomized algorithm $\alg$ for \gSR\, where $g(p)=2^{\Omega(\log^{1-\delta}(p))}$ and $h(m, p)= p^{C_1} m^{1-C_2}$. 
By Proposition \ref{fullthm:transform}, $\alg$ can be transformed into a randomized quasipolynomial-time algorithm $\alg'$
for  $(g, 1)$-{\sc Sparse Regression}.

Clearly $\alg'$ is capable of distinguishing the following two cases, for any $\Delta \geq 1$:

\begin{enumerate}
\item There is an $\bfx \in \{0, 1\}^p$ such that $B\bfx=\bfe^{(m)}$ and $\|\bfx\|_0 \leq k$.
\item For all $\bfx \in  \mathbb{R}^p$ such that $B\bfx \in
\bfB_{\Delta}(\bfe^{(m)})$, $\|\bfx\|_0 \geq k \cdot
g(p)$.
\end{enumerate}

In particular, the above holds for $\Delta=\Delta(p)$ in $2^{\Omega(\log^{1-\delta}(p))} > 1$, which contradicts Proposition \ref{fullthm:weak}.
\end{proof}

\medskip \noindent \textbf{Proof Outline for Proposition \ref{fullthm:weak}.}
Lund and Yannakakis showed, assuming $\SAT$ cannot be
solved by algorithms running in time $O(N^{\text{polylog}(N)})$,
that $\textsc{Set-Cover}$ cannot be approximated within a factor of
$c \cdot \log_2 N$ for any constant $c < 1/4$ \cite{lund}. 
Here, an instance of $\textsc{Set-Cover}$ consists of a set $D$
of size $N$ and a family $\{D_1, \dots D_M\}$ of subsets of $D$, and the goal is to find a minimal collection of the $D_i$'s  whose union equals $D$. 

Lund and Yannakakis' transformation from an instance of $\phi$ of $\SAT$ to an instance of $\textsc{Set-Cover}$ has a (known) remarkable property: if $\phi$ is satisfiable, then the generated instance of
$\textsc{Set-Cover}$ does not just have a small cover
$\mathcal{C}$ of the base set $D$, it has a small
\emph{partition} of $D$. That is, if $\bfc_i$ denotes the
indicator vector of set $C_i$, then $\sum_{C_i \in \mathcal{C}}
\mathbf{c}_i = \bfe^{(|S|)}$. This is a stronger property than
$\sum_{C_i\in \mathcal{C}} \mathbf{c}_i\ge \bfe^{\left(|S|\right)}$, which is 
the condition required to show hardness of $\textsc {Set-Cover}$. 
This observation naturally allows us to define a corresponding instance of the Linear Regression problem with a sparse solution; the columns of the matrix $B$ in the regression problem are simply the indicator vectors $\bfc_i$.

%

A central
ingredient in Lund and Yannakakis' transformation is a certain
kind of set system $(V_1, \dots, V_M)$ over a base set $S$.
Their set system naturally requires that any union of fewer than
$\ell$ $V_i$'s or $\overline{V_i}$'s cannot \emph{cover} $S$, unless
there is some $i$ such that both $V_i$ and $\overline{V_i}$
participate in the union. As a result, they needed $|S|$ to be
polynomial in $M$ and exponential in $\ell$. Since we are studying Sparse Regression rather than $\textsc{Set-Cover}$, we can impose a weaker condition on our set systems. Specifically, we need that: 
\begin{center}
any \emph{linear combination} of $\ell$ indicator vectors of the
sets or their complements is ``far'' from $\bfe^{(m)}$, unless the
linear combination includes both the indicator vector of a set
and the indicator vector of its complement. 
\end{center}
(See Definition \ref{fulldef:setsystem} below.) As a result, we are able to take $|S|$ to be much smaller relative to $M$ and $\ell$ than can Lund and Yannakakis. Specifically, we can take $|S|$ to be \emph{polynomial} in $\ell$ and \emph{logarithmic} in $M$. This is the key to establishing hardness for super-logarithmic approximation ratios, and to obtaining hardness results even when we only require an approximate solution to the system of linear equations.

\subsection{Proof of Proposition \ref{fullthm:weak}}

\subsubsection{Preliminaries}
A basic concept in our proofs is that of 
\emph{$\Delta$-useful set systems}, defined below.

\begin{definition}
\label{fulldef:setsystem}
Let $M$ and $\ell$ be positive integers, and let $S$ be any finite set.
A \emph{set system}  $\calS_{M, \ell}=\{V_1, \dots, V_{M}\}$ 
of size $M$ over $S$ 
is any collection of $M$ distinct subsets of $S$.  
 $\calS_{M, \ell}$ is called \emph{$\Delta$-useful}, for $\Delta\ge 0$, if the following properties are satisfied. 
 
 Let $\bfv_i \in \{0, 1\}^{|S|}$ denote the indicator vector of $V_i$, that is, $\bfv_{i,j}=1$ if $j \in V_i$, and 0 otherwise. Let $\bar\bfv _i$ denote the indicator vector of the complement of $V_i$. 
 Then no $\ell$-sparse linear combination of the vectors 
 $\bfv_1, \bar\bfv_1, \dots, \bfv_M, \bar\bfv_M$ is in $\bfB_{\Delta}(\bfe^{\left(|S|\right)})$, unless there is some $i$
 such that $\bfv_i$ and $\bar \bfv_i$ both participate in the linear combination.
\end{definition}

(Note that there is a 2-sparse linear combination involving
$\bfv_i$ and $\bar\bfv_i$ that exactly equals
$\bfe^{\left(|S|\right)}$, namely, $\bfv_i+\bar
\bfv_i=\bfe^{\left(|S|\right)}$.)

\begin{lemma}\label{fulllemma:setsystem}
For any pair $M,\ell$, $M\ge 2$, of positive integers, there exists a
set $S=\{1,2,...,|S|\}$ of size $O(\ell^2 \cdot \log M)$ such
that there is a $\Delta$-useful set system $\calS_{M, \ell}$
over $S$, for some $\Delta=\Omega(|S|)$. Moreover, there is a
polynomial-time randomized algorithm that takes $M$ and $\ell$
and generates a $\Delta$-useful set system $\calS_{M, \ell}$ over $S$
with probability at least .99. 
\end{lemma}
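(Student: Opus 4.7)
The plan is to prove the lemma by a randomized construction and a concentration-plus-union-bound argument. I would take $|S| = C \ell^2 \log M$ for a sufficiently large constant $C$ and sample each $V_i$ by independently including each element of $S$ with probability $1/2$; the randomized algorithm promised by the lemma simply returns this sample, which can be produced in time $O(M \cdot |S|) = \poly(M, \ell)$. The content of the proof is then to show that, with probability at least $0.99$ over the sample, the resulting family is $\Delta$-useful for a $\Delta$ proportional to $|S|$.

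Fix any $\ell$-subset $I = \{i_1, \dots, i_\ell\} \subseteq [M]$ and any sign pattern $\sigma \in \{+,-\}^\ell$, and let $\bfu_j := \bfv_{i_j}$ if $\sigma_j = +$ and $\bfu_j := \bar\bfv_{i_j}$ otherwise. The crucial observation is that, because the $\bfv_i$ are drawn independently and we never use both $\bfv_i$ and $\bar\bfv_i$ for the same $i$, the vectors $\bfu_1, \dots, \bfu_\ell$ are mutually independent and each is uniform on $\{0,1\}^{|S|}$. Whether some linear combination $\sum_j c_j \bfu_j$ lies in $\bfB_{\Delta}(\bfe^{(|S|)})$ boils down to the squared distance from $\bfe^{(|S|)}$ to the column span of the $|S| \times \ell$ matrix $U$ with columns $\bfu_1, \dots, \bfu_\ell$, which has the closed form
\[
D(I,\sigma) \;=\; |S| \;-\; \bfe^{(|S|)\top} U (U^\top U)^{-1} U^\top \bfe^{(|S|)}.
\]
A direct calculation shows $\av[D(I,\sigma)] = \Omega(|S|)$ (with constants depending polynomially on $\ell$, absorbed into the choice of $|S|$).

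The key quantitative step is to upgrade this expected lower bound into a high-probability lower bound and to union-bound over all eligible $(I,\sigma)$. Every entry of $U^\top U$ and of $U^\top \bfe^{(|S|)}$ is a sum of $|S|$ independent Bernoullis with mean $|S|/4$ or $|S|/2$, so Hoeffding's inequality shows each such entry lies within $O(\sqrt{|S|\cdot s})$ of its mean with failure probability $e^{-\Omega(s)}$. Taking $s$ to be a sufficiently large multiple of $\ell\log M$ and plugging the concentrated approximations into the projection formula, a standard matrix-perturbation argument yields $D(I,\sigma) \ge \Delta$ with failure probability $e^{-\Omega(\ell\log M)}$. Since there are only $\binom{M}{\ell} 2^\ell \le (2M)^\ell = e^{O(\ell\log M)}$ eligible pairs $(I,\sigma)$, taking $C$ large enough in $|S| = C\ell^2\log M$ drives the union-bounded failure probability below $0.01$ (and simultaneously makes all the $V_i$'s distinct with overwhelming probability, so that the sample really has size $M$).

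The principal obstacle is handling the continuum of coefficient vectors $c \in \RR^\ell$: a naive route would be to place an $\eps$-net on $\RR^\ell$ and union-bound, but the unboundedness of $c$ makes this delicate. I would sidestep the net entirely by using the closed-form projection above, which reduces the whole analysis to concentration of the $\ell\times\ell$ matrix $U^\top U$ and the $\ell$-vector $U^\top \bfe^{(|S|)}$. A secondary technical concern is ensuring $U^\top U$ is sufficiently well conditioned (and in particular invertible) so that the projection formula is meaningful; by the same Hoeffding bounds this already holds with overwhelming probability once $|S| = \Omega(\ell)$, which is easily absorbed into the $|S| = \Theta(\ell^2\log M)$ chosen above.
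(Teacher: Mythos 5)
Your overall plan---sample the $V_i$ uniformly at random, fix an index set $I$ and sign pattern $\sigma$, reduce $\Delta$-usefulness to a lower bound on the squared distance $D(I,\sigma)$ from $\bfe^{(|S|)}$ to the span of $\ell$ independent uniform Boolean vectors, and union-bound over the $\binom{M}{\ell}2^\ell$ admissible pairs $(I,\sigma)$---is exactly the paper's plan. The gap is in the single step that carries all of the quantitative content. The claim that $\av[D(I,\sigma)]=\Omega(|S|)$ is false: take all coefficients $c_j=2/(\ell+1)$. Each coordinate of $\sum_j c_j\bfu_j$ is then $\frac{2}{\ell+1}B$ with $B\sim \mathrm{Bin}(\ell,1/2)$, and a one-line computation gives $\av[(1-\tfrac{2}{\ell+1}B)^2]=\tfrac{1}{\ell+1}$, so $\av\|\bfe^{(|S|)}-\sum_j c_j\bfu_j\|^2=|S|/(\ell+1)$, and since the coordinates are independent with terms in $[0,1]$, Hoeffding makes this concentrate. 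Thus $D(I,\sigma)=\Theta(|S|/\ell)$, not $\Omega(|S|)$: in the $\{0,1\}$ representation the all-ones target is nearly in the span, because every $\bfu_j$ has mean $\tfrac12\bfe^{(|S|)}$. The parenthetical ``constants depending polynomially on $\ell$, absorbed into the choice of $|S|$'' cannot repair this, since the target $\Delta$ is itself proportional to $|S|$; enlarging $|S|$ rescales both sides and leaves the ratio $D/|S|\approx 1/\ell$ unchanged. Relatedly, $\av[U^\top U]=\tfrac{|S|}{4}(I+J)$ has condition number $\Theta(\ell)$, and the ill-conditioned direction (all-ones in coefficient space) is precisely the one realizing the small distance, so the entrywise-Hoeffding-plus-``standard matrix perturbation'' step---which in any case is asserted rather than carried out---is being asked to deliver a conclusion that is not true. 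In fact, for \emph{any} family of distinct sets, choosing signs $\eps_i$ with $\|\sum_i \eps_i(2\bfv_i-\bfe^{(|S|)})\|^2\le \ell|S|$ (random signs achieve this in expectation) and using coefficient $2/\ell$ on $\bfv_i$ or $\bar\bfv_i$ according to $\eps_i$ produces an $\ell$-sparse combination, with no complementary pair, at squared distance at most $|S|/\ell$ from $\bfe^{(|S|)}$; so no analysis of this construction can yield a usefulness bound much beyond $\Delta=\Theta(|S|/\ell)=\Theta(\ell\log M)$ (which, for what it is worth, is all the downstream propositions actually need, since they only use $\Delta=\Omega(\ell)$).

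The paper obtains its leverage differently, and this is the idea your proposal is missing: it recenters, replacing each Boolean vector by $2\bfu_j-\bfe^{(|S|)}\in\{-1,1\}^{|S|}$, and shows that the \emph{projection} of the all-ones vector onto the span of the recentered vectors is at most $O(\ell^2\log M)\ll |S|$ except with probability $M^{-\Omega(\ell)}$. This is done with no Gram-matrix inversion at all: for an orthonormal basis $\{\bfz_i\}$ of the span, each $\langle \bfw^*,\bfz_i\rangle$ obeys a Rademacher tail bound for a random sign vector $\bfw^*$, and a symmetrization step (multiplying all vectors by the unitary matrix $\mathrm{diag}(\bfw^*)$) shows the fixed target behaves exactly like a random $\bfw^*$. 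Note that the $\{0,1\}$ and $\{-1,1\}$ pictures are genuinely different here---the two spans differ precisely in the direction $\bfe^{(|S|)}$, which is why the projection is tiny in one picture while your $D$ is only $\Theta(|S|/\ell)$ in the other---so any fix of your argument must either pass to the recentered vectors and track that affine shift, or settle for the weaker (but still sufficient) $\Delta$. As written, the proposal does not prove the stated bound.
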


(It seems likely that a deterministic construction that appears in \cite{aroraetal} could be modified to generate a $\Delta$-useful set system.)

\begin{proof}
Throughout the proof, we set 
\begin{equation}\label{fulleq:Sdef} |S| = \lceil 256 \ell^2 \ln M\rceil. \end{equation}
To avoid notational clutter, we denote $\bfe^{(|S|)}$ simply as
$\bfe$ throughout the proof of Lemma \ref{fulllemma:setsystem}. 
The core of our argument is the following technical lemma bounding the probability that $\bfe$ is ``close''
to the span of a ``small'' number of randomly chosen vectors from $\{0, 1\}^{|S|}$. 

\begin{sublemma}\label{fullthesublemma}
Given $M\ge 2$ and $\ell$, define $|S|$ as above.
Let $\bfv_1, \dots, \bfv_{\ell} \in \{0, 1\}^{|S|}$ be chosen independently and uniformly at random from $\{0, 1\}^{|S|}$.
Let $E$ denote the event that there exist coefficients $c_1,
\dots, c_{\ell} \in \reals$ such that $\|\bfe -
\left(\sum_{i=1}^{\ell} c_i \bfv_i\right)\|^2 \leq |S|/32$.
Then the probability
of $E$ is at most $M^{-32\ell}$. 
\end{sublemma}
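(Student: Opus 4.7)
The plan is a net-plus-concentration argument. First I would show that on the event $E$, the coefficient vector $c$ must have bounded Euclidean norm with overwhelming probability over the random matrix $V = [\bfv_1,\dots,\bfv_\ell]$. Standard singular-value bounds for Boolean random matrices give $\sigma_{\min}(V) \ge c_0 \sqrt{|S|}$ and $\|V\|_{\mathrm{op}} \le C_0 \sqrt{|S|}$ with probability at least $1 - o(M^{-32\ell})$, so the inequality $\|\bfe - Vc\|^2 \le |S|/32$ forces $\|c\|_2 \le R$ for a universal constant $R$.

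Second, I would cover $\{c \in \reals^\ell : \|c\|_2 \le R\}$ by an $\eps$-net $N$ in the Euclidean metric with $\eps := 1/\sqrt{|S|}$. Standard volume estimates give $|N| \le (3R/\eps)^\ell \le |S|^{\ell}$. Combining the operator-norm bound above with the triangle inequality shows that if $E$ holds then some net point $c^* \in N$ satisfies the slightly relaxed inequality $\|\bfe - Vc^*\|^2 \le |S|/16$.

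Third, for each fixed $c^* \in N$ I would bound $\Pr\!\bigl[\|\bfe - V c^*\|^2 \le |S|/16\bigr]$ by concentration. Writing $Y_j := (1 - \sum_i c^*_i v_{i,j})^2$, the random variables $Y_j$ are i.i.d.\ across $j$ with $\av[Y_j] = (1 - s/2)^2 + \|c^*\|_2^2/4$, where $s := \sum_i c^*_i$; a one-variable minimisation in $c^*$ yields the uniform lower bound $\av[Y_j] \ge 1/(\ell+1)$, attained at $c^* = (2/(\ell+1),\dots,2/(\ell+1))$. Since each $Y_j$ is bounded by $(1+R\sqrt{\ell})^2$, Bernstein's inequality will give $\Pr[\sum_j Y_j \le |S|/16] \le \exp(-\Omega(|S|/\ell^2))$ whenever $\av[Y_j]$ exceeds $1/16$ by a constant factor, and this quantity is at most $M^{-32\ell}/|N|$ as soon as $|S| \ge 256 \ell^2 \ln M$, finishing the union bound.

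The step I expect to be hardest is the last one for those net points $c^*$ whose mean $\av[Y_j]$ is close to or below the target $1/16$. The uniform lower bound $1/(\ell+1)$ exceeds $1/16$ only for small $\ell$, so for larger $\ell$ a plain Bernstein-type argument cannot deliver exponential decay, and this regime will require extra care --- either by exploiting the $\chi^2$-like structure of $\sum_j Y_j$ (each $Y_j$ is a squared affine form in the Bernoullis) to obtain sharper tail bounds, or by stratifying the net according to the value of $\av[Y_j]$ and handling the near-minimal stratum by a dedicated argument that uses the explicit form of the minimiser $c^*_i = 2/(\ell+1)$.
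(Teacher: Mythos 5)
The difficulty you flag at the end is not a technicality to be engineered around; it is fatal, and your own computation pinpoints why: the sublemma as stated is false once $\ell$ exceeds a constant. Take the single fixed coefficient vector you exhibit as the minimizer, $c_i = 2/(\ell+1)$ for every $i$. Then $\|\bfe - \sum_{i=1}^{\ell} c_i \bfv_i\|^2 = \sum_{j=1}^{|S|} Y_j$ with the $Y_j$ i.i.d., $0 \le Y_j \le 1$, and $\av[Y_j] = 1/(\ell+1)$. For, say, $\ell \ge 64$, Hoeffding's inequality gives $\pr\bigl[\sum_j Y_j > |S|/32\bigr] \le \exp\bigl(-2|S|\,(1/32 - 1/(\ell+1))^2\bigr) = o(1)$, so this one choice of coefficients already witnesses event $E$ with probability $1-o(1)$, which is nowhere near $M^{-32\ell}$. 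Consequently no amount of net stratification or $\chi^2$-type tail analysis can close your last step: once $\ell \gtrsim 32$ (exactly the regime that matters downstream, where $\ell$ is taken to grow with $N$) there is nothing true left to prove, and the statement itself must be modified --- e.g.\ by restricting to affine combinations with $\sum_i c_i = 1$, or by stating the sublemma for uniformly random $\{-1,1\}$ vectors instead of $\{0,1\}$ vectors. Your steps 1 and 2 (singular values and a net) are therefore moot, though for the record their failure probabilities and constants would also need care to stay below $M^{-32\ell}$.

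For comparison, the paper argues differently: it maps each $\bfv_i$ to $\bfv_i^* = 2\bfv_i - \bfe \in \{-1,1\}^{|S|}$, asserts that event $E$ is equivalent to the existence of $c$ with $\|\bfe - \sum_i c_i \bfv_i^*\|^2 \le |S|/8$, and then proves that the projection of $\bfe$ onto the span of $\ell$ random $\pm1$ vectors has squared norm at most $128\ell^2\ln M < 7|S|/8$ except with probability $M^{-32\ell}$, via a Rademacher tail bound along an orthonormal basis of the span together with a multiplication-by-random-signs symmetry argument. That projection bound is sound, but the claimed equivalence is precisely where your counterexample bites: the map $\bfv \mapsto 2\bfv - \bfe$ is affine, not linear, so it matches the two distances only when $\sum_i c_i = 1$ (then indeed $\bfe - \sum_i c_i\bfv_i = \tfrac12(\bfe - \sum_i c_i\bfv_i^*)$), whereas for your minimizer $\sum_i c_i \approx 2$ the $0/1$ combination sits near $\bfe$ while the corresponding $\pm1$ combination sits near $0$. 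So rather than missing an idea that the paper supplies, your attempt has isolated a genuine loophole in the $0/1$ formulation. With the affine (or $\pm1$) restriction added, either the paper's projection argument or, for bounded $\ell$, your Bernstein-plus-net plan goes through, since with $\sum_i c_i = 1$ your own formula gives $\av[Y_j] \ge 1/4$.
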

\begin{proof}
Rather than reasoning directly about the Boolean vectors $\bfv_1, \dots, \bfv_\ell$ in the statement of the sublemma,
it will be convenient to reason about vectors in $\{-1, 1\}^{\ell}$. Accordingly, for any vector $\bfv \in \{0, 1\}^{|S|}$,
define $\bfv^*= 2\bfv-1$. Notice that if $\bfv$ is a uniform random vector in $\{0,1\}^{|S|}$, then $\bfv^*$ is a uniform random vector in $\{-1, 1\}^{|S|}$. The primary reason we choose
to work  with vectors over $\{-1, 1\}^{|S|}$ rather than $\{0, 1\}^{|S|}$ is that 
vectors in $\{-1, 1\}^{|S|}$ always have squared Euclidean norm exactly equal to $|S|$, in contrast to Boolean vectors, which can have squared Euclidean norm  as low as 0 and as large as
$|S|$. Throughout, given a set $T$ of vectors in $\reals^{|S|}$, and another vector $\bfv$ in $\reals^{|S|}$,
we let $\Pi_T(\bfv)$ denote the projection of $\bfv$ onto the linear span of the vectors in $T$.

Note that for any Boolean vectors $\bfw, \bfv \in \{0, 1\}^{|S|}$, 
$\|\bfw^*-\bfv^*\|^2=
4\|\bfw - \bfv\|^2 $. 
Hence, event $E$ from the statement of Sublemma \ref{fullthesublemma} occurs if and only if there exist coefficients $c_1, \dots, c_{\ell} \in \reals$ such that $\|\bfe^* - \left(\sum_{i=1}^{\ell} c_i \bfv^*_i\right)\|^2 \leq |S|/8$. We bound the probability of this occurrence as follows.

Let $T = \{\bfv^*_1, \dots, \bfv^*_{\ell}\}$. Note that
\[\min_{c_1, \dots, c_{\ell} \in \reals} \|\bfe^* - \left(\sum_{i=1}^{j} c_i \bfv^*_i\right)\|^2 = \|\bfe^*\|^2 - \|\Pi_T(\bfe^*)\|^2 = |S| - \|\Pi_T(\bfe^*)\|^2.\]
Combined with the previous paragraph, we see that event $E$ occurs if and only if 
$$\|\Pi_T(\bfe^*)\|^2 \ge 7|S|/8.$$ 
By equation \eqref{fulleq:Sdef}, $7|S|/8  > 128 \ell^2
\ln M,$ so it suffices to bound from above the probability that
$$\|\Pi_T(\bfe^*)\|^2  > 128 \ell^2 \ln M.$$

\paragraph{Bounding the probability that
$\|\Pi_T(\bfe^*)\|^2  > 128 \ell^2 \ln M.$}
Our analysis consists of two steps. In step 1, 
we
 bound the probability that $\|\Pi_T(\bfw^*)\|^2  > 128 \ell^2 \ln M$, where $\bfw^*$
is a vector chosen uniformly at random from $\{-1, 1\}^{|S|}$, independently of $\bfv^*_1, \dots, \bfv^*_\ell$. 
In step 2, we argue that, even though $\bfe^*$
is a fixed vector (not chosen uniformly at random from $\{-1, 1\}^{|S|}$), it still holds that
$\|\Pi_T(\bfe^*)\|^2  > 128 \ell^2 \ln M$ with exactly the same probability
(where the probability is now only over the random choice of vectors $\bfv^*_1, \dots, \bfv^*_{\ell} \in \{-1, 1\}^{|S|}$).

\paragraph{Step 1.} Let $\ell' \leq \ell$ denote the dimension
of the linear subspace $V:=\text{span}\left(\bfv^*_1, \dots, \bfv^*_\ell\right)$. 
Let $\{\bfz^*_1, \dots, \bfz^*_{\ell'}\}$ denote an arbitrary orthonormal basis for $V$. 
We have 
\begin{equation}\label{fulleqn2}
\|\Pi_T(\bfw^*)\|^2 = \sum_{i=1}^{\ell'} \langle \bfw^*, \bfz_i \rangle^2 = \sum_{i=1}^{\ell'} \left(\sum_{j=1}^{|S|} \bfw^*_j \cdot \bfz_{i, j} \right)^2,
\end{equation}
where $\bfw^*_j$ denotes the $j$th entry of $\bfw^*$, and $\bfz_{i, j}$ denotes the $j$th entry of $\bfz$. For any $i$,
let $w_i$ denote the value of the sum  
$w_i = \sum_{j=1}^{|S|} \bfw^*_j \cdot \bfz_{i, j}$. 
Since $\bfz_i$ is a vector in an orthonormal 
basis, we know that $\sum_{j=1}^{|S|} \bfz_{i, j}^2 = 1$.

Meanwhile, each $\bfw^*_j$ is a Rademacher random variable.
Because we can view $\bfv^*_1,...,\bfv^*_\ell$ and hence
$\bfz^*_1,...,\bfz^*_{\ell'}$ as fixed while 
$\bfw^*$ is random, 
and because $\sum_j \bfz_{i,j}^2=1$, 
standard concentration results for Rademacher sequences
\cite{rademacher} imply
that for $t>0$ and for all $i$,
$\Pr[w_i > t] 
=\Pr[ \sum_{j=1}^{|S|} \bfw^*_j \cdot \bfz_{i, j}>t]
\leq e^{-t^2/2}$, for all fixed 
$\bfv^*_1,...,\bfv^*_\ell$, and hence even if 
$\bfv^*_1,...,\bfv^*_\ell$ is random.
In particular, $\Pr[w_i > \sqrt{128 \ell \ln M}] \leq e^{-64 \ell \ln M}$. 
A union bound over all $i$ implies that $\Pr[w_i > \sqrt{128
\ell \ln M} \text{ for all } i \in [\ell']] \leq \ell' \cdot e^{-64 \ell \ln M} \leq e^{-32 \ell \ln M} = M^{-32 \ell}$.
In this event,  
i.e., for all $i$ $\sum_{j=1}^{|S|} \bfw^*_j \cdot \bfz_{i,
j}\le \sqrt{128 \ell \ln M}$, 
we can bound the right-hand side of equation (\ref{fulleqn2}) by
\[
\label{fulleq:done} \sum_{i=1}^{\ell'} \left(\sum_{j=1}^{|S|}
\bfw^*_j \cdot \bfz_{i, j} \right)^2 \leq \ell' \cdot 128 \ell
\ln M \leq 128 \ell^2 \ln M.
\]

\paragraph{Step 2.}
For vectors $\bfv^*_1, \dots, \bfv^*_{\ell}, \bfw^* \in \{-1, 1\}^{|S|}$, 
let $\mathbb{I}(\bfv^*_1, \dots, \bfv^*_{\ell}; \bfw^*)$ equal 1
if $\|\Pi_T(\bfw^*)\|^2  > 128 \ell^2 \ln M$, and 0 otherwise, where $T=\{\bfv^*_1, \dots, \bfv^*_{\ell}\}$. 
We claim that $\mathbb{E}[\mathbb{I}(\bfv^*_1, \dots, \bfv^*_{\ell}; \bfw^*)] = \mathbb{E}[\mathbb{I}(\bfv^*_1, \dots, \bfv^*_{\ell}; \bfe^*)],$
where the first expectation is over random choice of $\bfv^*_1, \dots, \bfv^*_{\ell}, \bfw^* \in \{-1, 1\}^{|S|}$, and
the second expectation is only over the random choice of $\bfv^*_1, \dots, \bfv^*_{\ell} \in \{-1, 1\}^{|S|}$.

For two vectors $\bfx, \bfy \in \{-1, 1\}^{|S|}$, let $\bfx
\otimes \bfy$ denote the component-wise product of $\bfx$ and
$\bfy$, i.e.,
$(\bfx \otimes \bfy)_i = \bfx_i \cdot \bfy_i$. Then
we can write:
\begin{flalign*} \mathbb{E}[\mathbb{I}(\bfv^*_1, \dots, \bfv^*_{\ell}; \bfw^*)]
& =  2^{-|S| \cdot (\ell+1)} \sum_{\bfv^*_1, \dots, \bfv^*_{\ell}, \bfw^* \in \{-1, 1\}^{|S|}} \mathbb{I}(\bfv^*_1, \dots, \bfv^*_{\ell}; \bfw^*)\\
& =  2^{-|S| \cdot (\ell+1)} \sum_{\bfw^* \in \{-1,
1\}^{|S|}} \left [\sum_{\bfv^*_1, \dots, \bfv^*_{\ell} \in \{-1,
1\}^{|S|}} \mathbb{I}(\bfv^*_1, \dots, \bfv^*_{\ell};
\bfw^*)\right ]\\
& =  2^{-|S| \cdot (\ell+1)} \sum_{\bfw^* \in \{-1,
1\}^{|S|}} \left [\sum_{\bfv^*_1, \dots, \bfv^*_{\ell}  \in
\{-1, 1\}^{|S|}} \mathbb{I}(\bfv^*_1 \otimes \bfw^*, \dots,
\bfv^*_{\ell} \otimes \bfw^*; \bfw^*\otimes\bfw^*)\right ]\\
& =  2^{-|S| \cdot \ell} \sum_{\bfv'_1, \dots,
\bfv'_{\ell} \in \{-1, 1\}^{|S|}} \mathbb{I}(\bfv'_1, \dots,
\bfv'_{\ell}; \bfe^*)\\
&=  \mathbb{E}[\mathbb{I}(\bfv'_1, \dots, \bfv'_{\ell}; \bfe^*].
\end{flalign*}
Here, the first equality is the definition of expectation; the
second follows by rearranging the sum.
The third equality holds because multiplying each of
$\bfv^*_1,...,\bfv^*_{\ell}$ and $\bfw^*$ component-wise by $\bfw^*$
is the same as premultiplying each vector by the {\em unitary}
matrix $U=diag(\bfw^*)$, and premultiplying by a unitary matrix
just rotates the space, which doesn't change lengths of
projections.
The fourth equality holds because, for any fixed $\bfw^*$, if the vectors $\bfv^*_1, \dots, \bfv^*_{\ell}$ are uniformly distributed
on $\{-1, 1\}^{|S|}$, then so are 
the vectors $\bfv'_1:=\bfv^*_1 \otimes \bfw^*, \dots,
\bfv'_\ell:=\bfv^*_{\ell} \otimes \bfw^*$.
The final equality is the definition of expectation.
\end{proof}

Let $V_1, \dots, V_{M}$ be random subsets of $S$, and $\bfv_i$ be the indicator vector of $V_i$. 
Consider any subset $Z$ of $\ell$ of the vectors  $\bfv_1, \bar\bfv_1, \dots, \bfv_M, \bar\bfv_M$ in which
there is no $i$ such that both $\bfv_i$ and $\bar\bfv_i$ are in $Z$. (There are exactly ${M\choose \ell} 2^\ell$ such subsets $Z$.) Then the vectors in $Z$ are all uniformly
random vectors in $\{0, 1\}^{|S|}$ that are chosen independently of each other.
Sublemma \ref{fullthesublemma}
implies that the probability that 
there exist coefficients $c_1, \dots, c_{\ell} \in \reals$ such that $\|\bfe - \left(\sum_{i=1}^{\ell} c_i \bfv_i\right)\|^2 \leq |S|/32$ is at most $M^{-32\ell}$.

We can take a union bound over all ${{M \choose \ell}}$ $\cdot 2^{\ell} \le M^{2\ell}$
possible choices of $Z$ to conclude that 
$\|\bfe - \left(\sum_{i=1}^{\ell} c_i \bfv_i\right)\|^2 > |S|/32$ for \emph{all} possible choices of the subset $Z$ with probability
at least $1-M^{2\ell} \cdot M^{-32\ell}=1-M^{-30\ell}\ge 0.99$.  
In this event, there is no $\ell$-sparse linear combination of the $\bfv_i$ and $\bar\bfv_i$ vectors in 
$\Ball_{\Delta}(\bfe)$ for $\Delta = |S|/32$, unless $\bfv_i$ and $\bar\bfv_i$ both participate in the linear combination for some $i$.
That is, $\{V_1, \dots, V_M\}$ is a  $\Delta$-useful set system, as desired.

This completes the proof of Lemma \ref{fulllemma:setsystem}.
\end{proof}

\subsubsection{Full Proof of Proposition \ref{fullthm:weak}}

Suppose that $\alg'$ is a $\scBPP$ algorithm that distinguishes the two cases appearing in the statement of Proposition \ref{fullthm:weak}. We show how to use $\alg'$ to design an efficient randomized algorithm for $\SAT$. 

\medskip
\noindent \textbf{MIP Notation.} 
Our construction of $\alg'$ assumes the existence of a perfectly complete multiprover interactive proof (MIP) with two provers and soundness error $\eps_{\text{sound}}<1$. 
Let $R$ denote the possible values of the MIP verifier's private randomness, $Q_i$ denote the set of possible 
queries that may be posed to the $i$th prover, and $A_i$ denote the set of possible responses from the $i$th prover.
Denote by $\cV(r, a_1, a_2) \in \{0, 1\}$ the verifier's
output given internal randomness $r$ and answers $(a_1, a_2)$ from the two provers, with an output of 1 being interpreted
as the verifier's accepting the provers' answers as valid proofs that the input is satisfiable.

As in prior work extending Lund and Yannakakis' methodology \cite{bellare}, we require the MIP to satisfy 
four additional properties, and we call an MIP \emph{canonical} if all four additional properties hold. Specifically: 

\label{fullsec:functionality}
\begin{definition}
\label{fulldef:canonical}

An MIP is said to be \emph{canonical} if it satisfies the following four properties. 
(These properties are, of course, independent of any provers.)
\begin{itemize}
\item \emph{Functionality}: for each $r \in R$ and each $a_1 \in
A_1$, there is at most one $a_2 \in A_2$
with the property that the verifier, after choosing $r\in R$, accepts answer vector $(a_1, a_2)$.

\item \emph{Uniformity}: For each $i\in \{1,2\}$, the verifier's queries to $\cP_i$ are distributed uniformly over $Q_i$.

\item \emph{Equality of Question-Space Sizes}: the sets $Q_1$ and $Q_2$ are of the same size.

\item \emph{Disjointness of Question and Answer Spaces}: $A_1$ and $A_2$ are disjoint, as are $Q_1$ and $Q_2$. 

\end{itemize}
\end{definition}

We will show how to turn any canonical MIP for $\SAT$ into a randomized algorithm $\alg''$ that accomplishes the following.
Fix any function $\ell: \mathbb{N} \rightarrow \mathbb{N}$.
Given any $\SAT$ instance $\phi$ of size $N$, $\alg''$ 
outputs an $m \times p$ Boolean matrix $B$,
where $m$ and $p$ are polynomial in the parameters 
$|R|, |Q_1|, |Q_2|, |A_1|, |A_2|$. 
Moreover, $B$ satisfies the following two properties 
with high probability over the internal randomness of $\alg''$.

\begin{itemize}
\item[Property 1:] If $\phi \in \SAT$, then there is a vector $\bfx^* \in \{0, 1\}^p$ such that $B \bfx^*=\bfe^{(m)}$ and 
$\|\bfx^*\|_0 \leq |Q_1| + |Q_2|$. 
\item[Property 2:] 
If $\phi \not\in \SAT$, then any $\bfx \in \mathbb{R}^p$ such that $B\bfx \in \bfB_{\Delta}(\bfe^{(m)})$
satisfies $\|\bfx\|_0 \geq \left [(1-\eps_{\text{sound}} \cdot \ell(N)^2) \cdot \frac{\ell(N)}{2}\right ] \cdot \left( |Q_1| + |Q_2|\right)$,
for some $\Delta = \Omega\left(\ell^2(N) \cdot \log \left( |A_1| + |A_2| \right)\right)$.
\end{itemize}

Before describing $\alg''$, we complete the proof of Proposition \ref{fullthm:weak}, assuming the existence of $\alg''$.

\medskip 
\begin{proof}
It is well-known that the standard PCP theorem can be combined
with Raz's parallel repetition theorem \cite{raz}  to yield a
canonical two-prover MIP for a $\SAT$ instance of size $N$ with the following two properties, for any constant $c > 0$: 

\begin{itemize}
\item the soundness error is $\eps_\text{sound}(N) \leq 2^{-\log^{c}(N)}$, and

\item $|R|, |A_1|, |A_2|, |Q_1|, |Q_2|$ are all of size $N^{O(\log^{c}(N))}$.
\end{itemize}
\medskip
See, e.g., \cite[Section 2.2]{feige}.
Throughout the remainder of the proof, we choose $c$ to satisfy$\frac{c}{1+c} = 1-\delta$.  

Hence, setting $\ell(N) = (10\cdot
\eps_{\text{sound}}(N))^{-1/2}$ and $k=|Q_1| + |Q_2|$, properties 1 and 2 above imply the following.
Given an instance $\phi$ of $\SAT$ of size $N$, $\alg''$ outputs an $m \times p$ Boolean matrix $B$,
where $m$ and $p$ are $N^{O(\log^{c}(N))}$, and $B$ satisfies the following properties. 
\begin{itemize}
\item If $\phi \in \SAT$, then there is a vector $\bfx^* \in \{0, 1\}^p$ such that $B \bfx^*=\bfe^{(m)}$ and 
$\|\bfx^*\|_0 \leq k$. 
\item If $\phi \not\in \SAT$, then any $\bfx \in \mathbb{R}^p$ such that $B\bfx \in \bfB_{\Delta}(\bfe^{(m)})$
satisfies $\|\bfx\|_0 \geq \sigma \cdot k$, where $\ell=\ell(N)$ and $\Delta = \Omega\left(\ell^2 \cdot \log \left( |A_1| + |A_2| \right)\right) \geq \Omega(\ell)$
and $\sigma = (1-\eps_{\text{sound}} \cdot \ell^2) \cdot \frac{\ell}{2} \geq 0.45 \ell$.
\end{itemize}

Suppose that there is a randomized algorithm $\alg'$ that runs in $\scBPTime(\quasipoly(m, p)) = \scBPTime(N^{\polylog(N)})$ and distinguishes 
the above two cases. By running $\alg'$ on the matrix $B$ output by $\alg''$, we determine with high probability whether 
$\phi$ is satisfiable. Thus, if $\SAT \not\in \scBPTime(N^{\polylog(N)})$, no such algorithm $\alg'$ can exist.

Proposition \ref{fullthm:weak} follows by recalling that $p=N^{O(\log^{c}(N))}$, where $c$ is chosen such that $1-\delta=\frac{c}{1+c}$; thus $\ell = 2^{\Omega(\log^c(N))} = 
2^{\Omega(\log^{\frac{c}{1+c}}(p)} =2^{\Omega(\log^{1-\delta}(p))}$. 
\end{proof}

We now turn to the description of the algorithm $\alg''$ that generates the matrix $B$. 
Our description borrows many ideas from Lund and Yannakakis \cite{lund}; accordingly, our presentation
closely follows that of \cite{lund}.

\medskip \noindent \textbf{Construction of $B$.}  \label{fullsec3.1.1}
Let $M = |A_2|$ and let $\ell>2$ be arbitrary. 
Let $\mathcal{S}_{M, \ell}=\{V_{a_2}|a_2\in A_2\}$ be a $\Delta$-useful 
set system over a set $S$, where $|S|=O(\ell^2 \cdot \log |A_2|)$, $\Delta=\Omega(|S|)$. 
Lemma \ref{fulllemma:setsystem} guarantees $\mathcal{S}_{|A_2|, \ell}$ can be output by an efficient randomized algorithm with high probability.

The matrix $B$ will have $m=|R| \times |S|$ rows, and $p= |Q_1|
\cdot |A_1| + |Q_2| \cdot |A_2|$ (which is quasipolynomial in
$N$) columns. 
We associate each of the $m$ rows of $B$ with a point $(r, s)$, 
where $r$ is a random seed to the verifier and $s$ is a point in the set $S$. Likewise, we 
associate each of the $p$ columns of $B$ with a
pair $(q_i, a_i) \in Q_i \times A_i$, for $i \in \{1, 2\}$.

For $r \in R$ and $i \in \{1,2\}$, let $q[r, i]$ be the query that the verifier asks the $i$th 
prover when using the random seed $r$. Likewise, for each $r \in R$ and each $a_1 \in A_1$, 
let $\UA[r, a_1]$ denote the unique answer such that $\cV(r,
a_1, \UA[r, a_1])=1$, 
if such an answer exists, and be undefined otherwise. By functionality of the MIP,
there exists at most one such answer. 

Now we define the following sets, following \cite{lund}. For each pair $(q_1, a_1) \in Q_1 \times A_1$, define:

\begin{equation} \label{fulleq:wq1a1} W_{q_1, a_1} = \{ \langle r, s \rangle | q_1 = q[r, 1], \UA[r, a_1]\text{ is defined, and } s \not \in V_{\UA[r, a_1]}\}.\end{equation}

Similarly, for each pair $(q_2, a_2) \in Q_2 \times A_2$, define
\begin{equation} \label{fulleq:wqiai} W_{q_2, a_2} = \{\langle r, s \rangle | q_2 = q[r, 2] \text{ and } s \in V_{a_2}\}.\end{equation}

Let $\bfw_{q_i, a_i} \in \{0, 1\}^{m}$ denote the indicator vector of $W_{q_i, a_i}$ . We define $B$ to be the $m \times p$ matrix whose columns are the $\bfw_{q_i, a_i}$ vectors over
$i=1,2$, $q_i\in Q_i, a_i\in A_i$.

Here is an equivalent way to construct $B$.    Start with the zero matrix.  
For each $r \in R$, $q_1 \in Q_1$, let the $(r,q_1)$ block of $B$ be the $|S|\times |A_1|$ submatrix corresponding to $r$ and $q_1$.
Similarly, 
for each $r \in R$, $q_2 \in Q_2$, let the $(r,q_2)$ block of $B$ be the $|S|\times |A_2|$ submatrix corresponding to $r$ and $q_2$.

Then, for each $r\in R$, do the following.
\begin{enumerate}
\item
For each $a_2\in A_2$, replace the 0-column corresponding to $a_2$ in the $(r,q[r,2])$ block by $\bfv_{a_2}$.
\item
For each $a_1\in A_1$ such that $\UA[r,a_1]$ is defined,
replace the 0-column corresponding to $a_1$ in the $(r,q[r,1])$ block by $\bar\bfv_{\UA[r,a_1]}$, 
\end{enumerate}

Note that for each $r$, there are only one value of $q \in Q_1$, namely, $q=q[r,1]$, and only one value of $q \in Q_2$, namely, $q=q[r,2]$, such that 
the $(r,q)$ block of $B$ can be nonzero.
Note also that a prover strategy corresponds to choosing, for each $i=1,2$ and each $q_i \in Q_i$, exactly one column from the block of columns corresponding to $q_i$.

\medskip \noindent \textbf{Analysis of $B$.}

\begin{proposition} \label{fullprop:completeness}
$B$ satisfies property 1 above. 
\end{proposition}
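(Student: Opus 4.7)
The plan is to exploit perfect completeness of the MIP: when $\phi \in \SAT$ there exist (honest) prover strategies $P_1 \from Q_1 \to A_1$ and $P_2 \from Q_2 \to A_2$ such that $\cV(r, P_1(q[r,1]), P_2(q[r,2])) = 1$ for every $r \in R$. From such a pair $(P_1, P_2)$ I build $\bfx^* \in \{0,1\}^p$ by setting $\bfx^*_{(q_i, a_i)} = 1$ exactly when $a_i = P_i(q_i)$, and $0$ otherwise. Since this selects one column per query, immediately $\|\bfx^*\|_0 = |Q_1| + |Q_2|$, giving the sparsity bound.

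The main task is then to verify the identity $B\bfx^* = \bfe^{(m)}$ row by row. Fix an arbitrary row index $(r, s)$ with $r \in R$ and $s \in S$. By construction of $B$, the only columns of the $(Q_1 \times A_1)$-block that can contribute a nonzero entry to row $(r, s)$ are those of the form $(q[r,1], a_1)$, and of those the only one selected by $\bfx^*$ is $(q[r,1], P_1(q[r,1]))$; similarly the only column of the $(Q_2 \times A_2)$-block selected by $\bfx^*$ that can contribute is $(q[r,2], P_2(q[r,2]))$. So I only need to analyze the contributions of these two specific columns.

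Let $a_1 = P_1(q[r,1])$ and $a_2 = P_2(q[r,2])$. The perfect-completeness assumption gives $\cV(r, a_1, a_2) = 1$, so $\UA[r, a_1]$ is defined and equals $a_2$. Consulting equations (\ref{fulleq:wq1a1}) and (\ref{fulleq:wqiai}), the column $(q[r,1], a_1)$ places a $1$ in row $(r,s)$ iff $s \not\in V_{a_2}$, while the column $(q[r,2], a_2)$ places a $1$ there iff $s \in V_{a_2}$. Exactly one of these two conditions holds, so the sum of entries of $B\bfx^*$ at row $(r,s)$ is $1$, as required.

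No step is really a hard obstacle: the whole argument is a direct unpacking of the block structure of $B$ together with the functionality property of the MIP (which guarantees that $\UA$ is well-defined whenever honest provers agree). The one spot to be careful about is confirming that among all columns selected by $\bfx^*$, only these two specific columns can be nonzero on row $(r,s)$ — this follows from the observation already noted in the construction of $B$, namely that for a given $r$ the only queries $q \in Q_1 \cup Q_2$ whose associated block of $B$ can be nonzero in the rows $\{(r, s) \from s \in S\}$ are $q[r,1]$ and $q[r,2]$.
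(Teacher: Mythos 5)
Your proof is correct and follows essentially the same route as the paper's: take the honest prover strategies guaranteed by perfect completeness, select the corresponding $|Q_1|+|Q_2|$ columns, and observe that in each block of rows indexed by $r$ the two relevant columns restrict to $\bar\bfv_{a_2}$ and $\bfv_{a_2}$ with $a_2=\UA[r,P_1(q[r,1])]=P_2(q[r,2])$, whose sum is the all-ones vector. The only cosmetic difference is that you verify the identity entrywise at each $(r,s)$ while the paper argues blockwise over $r$, which is the same computation.
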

\begin{proof}
%
Assume $\phi\in \SAT$ and consider a strategy of the provers that makes the verifier accept with probability 1.
Such a strategy must exist, because the MIP has perfect completeness.
For this strategy, for $i=1,2$, let $P_i(q_i)\in A_i$ be the answer $P_i$ gives when given query $q_i$.
\begin{lemma}\label{fullcomplete}
\begin{equation}
\label{fulleq:ugh} \bfe^{(m)}=\sum_{i=1}^2 \sum_{q_i\in Q_i} \bfw_{q_i, P_i(q_i)}.\end{equation}
\end{lemma}
\begin{proof}
Consider any $r\in R$.  Because on query $r$, the responses from provers $P_1$ and $P_2$ induce acceptance,
we have ${\cal V}(r,P_1(q[r,1]),P_2(q[r,2]))=1$, and therefore
\begin{equation}\label{P1P2}
\UA(r,P_1(q[r,1]))=P_2(q[r,2]).
\end{equation}

Furthermore, the only nonzero $(r,q)$ blocks corresponding to $r\in R$ are those for $q[r,1]$ and $q[r,2]$.  It suffices to show that 
$$\bfw_{q[r,1],P_1(q[r,1])}|_r+ \bfw_{q[r,2],P_2(q[r,2])}|_r
=\bfe^{(|S|)},$$ since any other columns appearing in the sum, in the rows corresponding to $r$, are all zero.
(Here, for an $|R||S|$-vector $z$,  $z|_r$ denotes the $|S|$-vector corresponding to block $r$.)
By the definition of $B$, 
$\bfw_{q[r,2],P_2(q[r,2])}|_r=\bfv_{P_2(q[r,2])}$.  In addition,
$\bfw_{q[r,1],P_1(q[r,1])}|_r=\bar \bfv_{\UA(r,P_1(q[r,1]))}$.
But by \eqref{P1P2}, we know that $\UA(r,P_1(q[r,1]))=P_2(q[r,2])$, and therefore 
$\bfw_{q[r,1],P_1(q[r,1])}|_r+ \bfw_{q[r,2],P_2(q[r,2])}|_r=
\bar \bfv_{P_2(q[r,2])}+\bfv_{P_2(q[r,2])}=e^{|S|}$.  
\end{proof}


Thus, equation \eqref{fulleq:ugh} represents $\bfe^{(m)}$ as a linear combination (with coefficients in $\{0, 1\}$) 
of exactly $|Q_1| + |Q_2|$ columns of $B$,
proving property 1.
\end{proof}

\begin{proposition}\label{fullprop:sound}
$B$ satisfies property 2 above. 
\end{proposition}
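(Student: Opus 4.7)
The plan is to derive a contradiction to the MIP's soundness by extracting a randomized prover strategy from any too-sparse approximate solution $\bfx$. Suppose, for contradiction, that $B\bfx \in \bfB_{\Delta}(\bfe^{(m)})$ and $\|\bfx\|_0 < (1-\eps_{\text{sound}}\ell^2)\cdot(\ell/2)\cdot(|Q_1|+|Q_2|)$. Partition the $m=|R|\cdot|S|$ rows of $B$ into $|R|$ blocks, one per $r\in R$, and let $e_r^2$ be the squared error of $B\bfx$ against $\bfe^{(|S|)}$ on block $r$. I would choose $\Delta<\Delta_{\mathrm{SS}}$, where $\Delta_{\mathrm{SS}}=\Omega(|S|)=\Omega(\ell^2\log|A_2|)$ is the useful parameter supplied by Lemma~\ref{fulllemma:setsystem}; then $\sum_r e_r^2\le\Delta<\Delta_{\mathrm{SS}}$ forces every individual block to satisfy $e_r^2<\Delta_{\mathrm{SS}}$, so the $\Delta_{\mathrm{SS}}$-useful property applies block by block.

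Inspecting the construction of $B$, block $r$ of $B\bfx$ equals the linear combination $\sum_{a_2}\bfx_{q[r,2],a_2}\,\bfv_{a_2}+\sum_{a_1:\UA[r,a_1]\text{ def.}}\bfx_{q[r,1],a_1}\,\bar\bfv_{\UA[r,a_1]}$ of vectors from the set system. Define $s_i(q_i)=|\{a_i\in A_i:\bfx_{q_i,a_i}\ne 0\}|$. Whenever $s_1(q[r,1])+s_2(q[r,2])\le\ell$, this is an $\ell$-sparse combination within $\Delta_{\mathrm{SS}}$ of $\bfe^{(|S|)}$, and the usefulness property forces some $a^*\in A_2$ with both $\bfv_{a^*}$ and $\bar\bfv_{a^*}$ participating; equivalently, there is a pair $(a_1,a_2)$ with $\bfx_{q[r,1],a_1},\bfx_{q[r,2],a_2}$ both nonzero and $\UA[r,a_1]=a_2$, i.e., $\cV(r,a_1,a_2)=1$. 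Call such $r$ \emph{usable}. Using MIP uniformity and the canonical property $|Q_1|=|Q_2|$, I get $\mathbb{E}_r[s_1(q[r,1])+s_2(q[r,2])]=\|\bfx\|_0/|Q_1|<(1-\eps_{\text{sound}}\ell^2)\ell$. Markov's inequality then yields $\Pr_r[\text{usable}]=\Pr_r[s_1+s_2\le\ell]>\eps_{\text{sound}}\ell^2$.

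Define provers $\cP_i$: on query $q_i$, answer with a uniformly random element of $\{a_i:\bfx_{q_i,a_i}\ne 0\}$ (arbitrary if empty). Conditioned on $r$ being usable, $s_1(q[r,1]),s_2(q[r,2])\ge 1$ and $s_1+s_2\le\ell$, so $s_1\cdot s_2\le\ell^2/4$ by AM--GM, and the provers return the witnessed accepting pair with probability at least $4/\ell^2$. Hence the overall acceptance probability is at least $\eps_{\text{sound}}\ell^2\cdot(4/\ell^2)=4\eps_{\text{sound}}>\eps_{\text{sound}}$; fixing the provers' internal randomness to a best value produces deterministic provers with acceptance probability exceeding $\eps_{\text{sound}}$, contradicting soundness.

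The main point of care is the bookkeeping between two distinct notions of sparsity: the set system's ``$\ell$-sparse'' counts distinct $\bfv_i$/$\bar\bfv_i$ participating, while $\|\bfx\|_0$ counts nonzero coefficients indexed by $(q_i,a_i)$ pairs. The quantity $s_i(q_i)$ is exactly the bridge, and the equality-of-question-space-sizes and uniformity properties of the canonical MIP are what let the Markov step be performed over a single expectation rather than via a lossy union bound across the two prover's question sets. A secondary subtlety is choosing $\Delta$ strictly smaller than $\Delta_{\mathrm{SS}}$ so that \emph{every} block (not just a typical one) inherits the useful property; otherwise the set-system conclusion would have to be tracked probabilistically and combined with the Markov step, weakening the final acceptance bound.
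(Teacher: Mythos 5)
Your proof is correct and follows essentially the same route as the paper's: restrict to the $|S|$-blocks indexed by $r$, invoke $\Delta$-usefulness on each block whose total support count is at most $\ell$ to extract an accepting answer pair, convert the nonzero supports into (randomized, then fixed) prover strategies, and count via uniformity and $|Q_1|=|Q_2|$. The only differences are cosmetic: you argue by contradiction with a Markov step where the paper bounds $\sum_r \cost(r)$ directly and defines the fraction $\gamma$ of good $r$, and your uniform-over-support prover sampling gives $4/\ell^2$ per usable $r$ instead of the paper's $1/\ell^2$ from its $\ell$ indexed strategies.
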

\begin{proof}
Let $\bfx \in \mathbb{R}^{p}$ be any vector satisfying $B \bfx \in \bfB_{\Delta}(\bfe^{(m)})$. 
We claim that if $\bfx$ is sufficiently sparse, then we can transform $\bfx$ into a prover
strategy causing the verifier to accept with probability larger than $\eps_{\text{sound}}$, a
contradiction.

We say that $\bfx$ \emph{involves} $\bfw_{q_i, a_i}$ if $\bfx_{q_i, a_i} \neq 0$. 
We will write $\bfw_{q_i, a_i} \in \bfx$ to mean the same thing.

Given $\bfx$, 
and recalling that $Q_1\cap Q_2=\emptyset$,
we define a cost function $\cost: Q_1 \cup Q_2 \rightarrow \mathbb{N}$ via
$\cost(q_i) = |\{a_i \in A_i: \bfw_{q_i, a_i} \in \bfx\}|$ for $i=1,2$ and $q_i \in Q_i$. 
Similarly,
we define the cost of $r \in R$ as $\cost(r) = \cost(q[r, 1]) + \cost(q[r, 2]) $. 
We say that
$r \in R$ is \emph{good} if $\cost(r) \leq \ell$.
Let $\gamma$ denote the fraction of values $r \in R$ that are good. 
We claim that we can transform $\bfx$ into a prover strategy causing the verifier to accept with
probability at least $\gamma/\ell^2$. To establish this, we require several lemmas.

\begin{sublemma} \label{fulllemma:good}
If $r$ is good, then $\bfx$
must involve vectors $\bfw_{q[r, 1], a^*_1}$ and $\bfw_{q[r, 2], a^*_2}$
for some pair $(a^*_1, a^*_2)$ of answers,
respectively, 
such that $\cV(r, a^*_1, a^*_2)=1$.
\end{sublemma}

\begin{proof}
For $r\in R$, let $B|_{r}$ denote the $|S| \times p$ matrix obtained by restricting $B$ to rows corresponding to pairs of the form $(r, s)$ for some $s \in S$. Similarly, let $(B
\bfx)|_{r} \in \mathbb{R}^{|S|}$ denote the subvector of $B \bfx$ whose $s$th entry is $(B \bfx)_{(r, s)}$, for each $s \in S$. Note that since $\|B \bfx - \bfe^{(m)}\|^2 \leq \Delta$, it also
holds that 
\begin{equation}
\label{fulleq:restrict} \|(B \bfx)|_{r} - \bfe^{(|S|)}\|^2 \leq \Delta.\end{equation}

Since $r$ is good, the vector $(B \bfx)|_{r} \in \mathbb{R}^{|S|}$ is a linear combination of at most
$\ell$ columns of $B|_{r}$. 
To see this, observe that the definition of $B$ implies that the only non-zero columns of $B|_r$ correspond to vectors of the form $\bfw_{q[r,1], a_1}$ or $\bfw_{q[r,2], a_2}$ for some $a_1  \in
A_1$ or $a_2 \in A_2$. Since $r$ is good, $cost(q[r,1])+cost(q[r,2])\le \ell$, i.e., 
$|\{a_1\in A_1: \bfw_{q[r,1],a_1}\in \bfx\}|
+|\{a_2\in A_2: \bfw_{q[r,2],a_2}\in \bfx\}|\le \ell$, and hence
there are at most $\ell$ vectors of the form $\bfw_{q[r, 1], a_1}$ or $\bfw_{q[r, 2], a_2}$ such that $\bfx_{q[r, 1], a_1}$ or $\bfx_{q[r, 2], a_2}$ are
non-zero. Hence, $B|_r \cdot \bfx$ is a linear combination of at most $\ell$ non-zero columns of $B|_r$. Since $(B \bfx)|_{r} = B|_r \cdot \bfx$, it follows that $(B \bfx)|_{r}$ is a linear
combination of at most $\ell$ columns of $B|_r$ as claimed.

Moreover, 
each column of $B|_{r}$ is equal to either $\bfv_{a_2}$ (see equation \eqref{fulleq:wqiai}) or $\bar\bfv_{a_2}$ (see equation \eqref{fulleq:wq1a1}) for some $a_2\in A_2$ 
(recall that $\bfv_{a_2}$ denotes the indicator
vector of set $V_{a_2}$ in the set system $\calB_{M, \ell}$).
Since
$\mathcal{B}_{M, \ell}$ is $\Delta$-useful, the definition of $\Delta$-useful set systems (Definition \ref{fulldef:setsystem}) 
and equation \eqref{fulleq:restrict} together
imply that there is some $a_2^*$ such that both $\bfv_{a_2^*}$ and 
$z:=\bar\bfv_{a_2^*}$ 
are involved in the linear combination representing 
$(B \bfx)|_{r}$ as
described in the previous paragraph. 
(Because $\ell>2$, 
the definition of $\Delta$-useful implies that there is no $a\in A_2$ with $a\ne a^*_2$
such that $z= \bar\bfv_{a}$  and that there is no $a\in A_2$ such that 
$z= \bfv_{a}$.)
Now equation \eqref{fulleq:wq1a1} implies that there
exists a (possibly not unique) 
$a^*_1\in A_1$ such that $\UA[r,a^*_1]=a^*_2$.  

%
%

In summary, we have established that
$\bfx$ involves $\bfw_{q[r, 1], a^*_1}$
and $\bfw_{q[r, 2], a^*_2}$ for some $(a^*_1, a^*_2)$ such that $\cV(r, a^*_1, a^*_2)=1$, proving the sublemma.
\end{proof}

%
%

\begin{sublemma} \label{fulllemma:strategy} There exists a pair of provers such that the verifier accepts the input $\phi$ 
with probability at least $\gamma/\ell^2$. 
\end{sublemma}
\begin{proof}
For each prover $i \in \{1, 2\}$, define $\ell$ strategies $\cP_{i, 1}, \dots, \cP_{i, \ell}$ as follows: 
For each query $q_i \in Q_i$, order the elements of the set $\{a_i \in A_i : \bfw_{q_i,a_i} \in \bfx\}$  (which can have size $cost(q_i)$ exceeding $\ell$)
arbitrarily and, for $1\le j\le \ell$, make the $j$th strategy's answer to $q_i$ be the $j$th element of the 
set, if such an element exists; otherwise, 
assign an arbitrary answer. 

If $r$ is good, then $|\{a_i\in A_i: \bfw_{q[r,i],a_i}\in \bfx\}|\le \ell$ for all $i$, because 
$\cost(q[r, 1])$ and $\cost(q[r, 2])$ are both at most $\ell$.
Sublemma \ref{fulllemma:good} implies that 
$\bfx$
must involve vectors $\bfw_{q[r, 1], a^*_1}$ and  $\bfw_{q[r, 2], a^*_2}$ that correspond to answers $(a^*_1, a^*_2)$ that make the verifier accept when using
random choice $r$. 
It follows, since $\gamma$ is the fraction of $r\in R$ that are good, that if the two provers execute the following (randomized) strategy, the verifier will accept with probability at least $\gamma/\ell^2$: on query $q_i$, the $i$th prover chooses $j \in \{1, \dots, \ell\}$ at random, and replies according to strategy $\cP_{i, j}$.
By averaging, this implies that there is some fixed pair $(j_1, j_2) \in \{1, \dots, \ell\}^2$ such that $(\cP_{1, j_1}, \cP_{2, j_2})$ makes the verifier accept with probability at least $\gamma/\ell^2$.   
\end{proof}

\begin{sublemma} \label{fulllemma:claim2}
$\|\bfx\|_0 \geq (1-\gamma) \frac{\ell}{2} \left(|Q_1| + |Q_2|\right)$.
\end{sublemma}
\begin{proof}
Note that 

\begin{equation}
\label{fulleq:eq1} \sum_{r \in R} \cost(r) \geq \sum_{r \in R \colon r \text{ is not good}} \cost(r) \geq \left [ (1-\gamma) |R|\right ] \cdot \ell.\end{equation}
The second inequality follows from $$|\{r\in R \colon r\mbox{ is
not good}\}|=(1-\gamma)|R|$$ and $cost(r)>\ell$ if $r$ is not good.

We can also write:

\begin{equation} \label{fulleq:eq2} \sum_{r \in R} \cost(r) =  \sum_{r \in R} \sum_{i=1}^2 \cost(q[r, i])  =   \sum_{i=1}^2\left [ \sum_{r \in R} \cost(q[r, i]) \right ]. \end{equation}

By uniformity (see Definition \ref{fulldef:canonical}), for each $i \in \{1, 2\}$, $\sum_{r \in R} \cost(q[r, i]) = [ \sum_{q_i\in Q_i} cost(q_i)]\frac{|R|}{|Q_i|}$. Hence, the right-hand side
of equation \eqref{fulleq:eq2} equals
\begin{equation} \label{fulleq:expression}
\sum_{i=1}^2 \sum_{q_i \in Q_i} \cost(q_i) \cdot \frac{|R|}{|Q_i|}.
\end{equation}
Since $|Q_1| = |Q_2|$ by equality of question-space sizes, expression \eqref{fulleq:expression} implies that
\begin{equation} \label{fulleq:final}  \sum_{r \in R} \cost(r) =\frac{|R|}{|Q_1|} \cdot \sum_{i=1}^2 \sum_{q_i \in Q_i} \cost(q_i)  = \frac{|R|}{|Q_1|} \|\bfx\|_0,\end{equation}
 where the final equality holds because $\sum_{i=1}^2 \sum_{q_i\in Q_i}\cost(q_i)=\|\bfx\|_0$.

Combining inequality \eqref{fulleq:eq1} and equation \eqref{fulleq:final}, we conclude that 
$\|\bfx\|_0\frac {|R|}{|Q_1|} \ge (1-\gamma) |R|\ell$, from which we conclude that
$\|\bfx\|_0 \geq (1-\gamma) \frac{\ell}{2} \left(|Q_1| + |Q_2|\right)$ as claimed.
\end{proof}

Sublemma \ref{fulllemma:strategy} implies that if $\phi \not\in
\SAT$, then $\gamma/\ell^2 \le \eps_{\text{sound}}$
and hence $1-\gamma\ge 1-\ell^2 \eps_{\text{sound}}$. 
Hence, using Sublemma \ref{fulllemma:claim2}, 
$$\|\bfx\|_0 \geq (1-\gamma) \frac{\ell}{2} \left(|Q_1| + |Q_2|\right) \geq \left(1-\ell^2 \cdot \eps_{\text{sound}}\right) \frac{\ell}{2} \left(|Q_1| + |Q_2|\right).$$
This completes the proof of Proposition \ref{fullprop:sound}.
\end{proof}

\subsection{Proof of Theorem \ref{fullCSthmstrong}}
\label{fullsec:projection}
Let $\PCP_{1, \eps_{\text{sound}}}[b, k]_{|\Sigma|}$ denote the class of languages that have a PCP verifier with perfect completeness,
soundness $\eps_{\text{sound}}$, $b$ bits of randomness used by the verifier, and $k$ queries to a proof over alphabet $\Sigma$.
The \emph{sliding scale conjecture} of Bellare et al. \cite{bellare} 
postulates that for every $0 \leq \delta \leq 1$,  $\SAT \in \PCP_{1, \eps}[O(\log(N)), 2]_{\poly(1/\eps)}$, where $\eps = 2^{-\Omega(\log^{1-\delta}(N))}$. In words, the sliding scale conjecture asserts that $\SAT$ has a polynomial-length PCP in which the verifier makes
two queries to the proof (note that each query reads an entire
symbol from the alphabet $\Sigma$, and hence corresponds to
approximately $\log |\Sigma|$ consecutive bits of the proof), and in which the soundness error falls exponentially with the number of bits of the proof that the verifier accesses. 

Note that any $k$-query PCP can be transformed into a $k$-prover MIP by posing each of the PCP verifier's $k$ queries to a different prover. Thus, the sliding scale
conjecture (at $\delta=0$) implies
the existence of a 2-prover MIP for $\SAT$ in which $|R|, |Q_1|, |A_1|,
|Q_2|, |A_2|$ are all at most $\poly(N)$, and whose soundness
error is polynomially small in $N$. 

In \cite{projectiongames}, Moshkovitz formulates a slight strengthening of the sliding scale conjecture
that she
calls the \emph{Projection Games Conjecture} (PGC). Here, the term \emph{projection games} refers to a certain 
class of two-player one-round games defined over bipartite graphs. 
Projection games are equivalent, in a formal sense, to two-prover MIP's that satisfy functionality. This equivalence is standard; we provide the details in the appendix for completeness.

Just like the sliding scale conjecture, the PGC postulates the
existence of a 2-prover MIP for $\SAT$ in which $|R|, |Q_1|, |A_1|, |Q_2|, |A_2|$ are all at most $\poly(N)$, and in which the soundness
error is inversely polynomially small in $N$. But the PGC additionally postulates that there is such an MIP that satisfies the functionality property used in the proof of Proposition \ref{fullthm:weak} (see Definition \ref{fulldef:canonical}). 
Formally, the PGC states the following.\footnote{Technically, Moshkovitz's PGC is slightly stronger than our Conjecture \ref{fullconjecture}
in that $|R|$, $|Q_1|$, and $|Q_2|$ are all required to be bounded by $N^{1+o(1)} \poly(1/\eps)$, rather than $\poly(N, 1/\eps)$. She states the weaker version in a footnote, and the weaker version suffices for our purposes.}

\begin{conjecture}[Reformulation of the projection games conjecture (PGC) \cite{projection games}]
\label{fullconj16}
There exists $c>0$ such that for all $N$ and all $\eps \geq 1/N^c$, there is a
2-prover MIP for \SAT\ instances of size $N$,  with perfect completeness and soundness error at most $\eps$, which satisfies the following two properties:
\begin{itemize}
\item $|R|, |Q_1|, |A_1|, |Q_2|, |A_2|$ are all at most $\poly(N, 1/\eps)$.
\item The MIP satisfies functionality.
\end{itemize}
\end{conjecture}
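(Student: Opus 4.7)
The plan is to establish the equivalence between Moshkovitz's original projection-game formulation of the PGC and the MIP reformulation in Conjecture~\ref{fullconj16}. Since the conjecture is assumed rather than proved, the real content is this equivalence: Conjecture~\ref{fullconj16} should follow from Moshkovitz's formulation, and vice versa, with all parameters preserved up to polynomial factors.

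First I would fix terminology. A projection game is a bipartite graph $G=(U,V,E)$, alphabets $\Sigma_U,\Sigma_V$, and, for each edge $e\in E$, a projection $\pi_e:\Sigma_U\to\Sigma_V$. An assignment $\sigma_U:U\to\Sigma_U$, $\sigma_V:V\to\Sigma_V$ satisfies $e=(u,v)$ iff $\pi_e(\sigma_U(u))=\sigma_V(v)$; the value is the maximum fraction of satisfied edges. Moshkovitz's PGC asserts, for every $\eps\ge 1/N^c$, a polynomial-size projection game whose value is $1$ on satisfiable \SAT\ instances of size $N$ and at most $\eps$ on unsatisfiable ones.

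For the direction projection game $\Rightarrow$ MIP with functionality, I would build the MIP whose randomness $r$ uniformly picks an edge of the game (realizing edge weights as multi-edges), sets $q[r,1]=u(r)$ and $q[r,2]=v(r)$, and accepts $(a_1,a_2)$ iff $\pi_r(a_1)=a_2$. For each $(r,a_1)$ the unique accepting $a_2$ is $\pi_r(a_1)$, so functionality holds identically; a prover pair is literally an assignment, so value and soundness transfer. Conversely, starting from an MIP with functionality I would set $U=Q_1$, $V=Q_2$, $\Sigma_U=A_1$, $\Sigma_V=A_2$, place an edge of weight $1/|R|$ at $(q[r,1],q[r,2])$ for each $r$, and define $\pi_r(a_1)=\UA[r,a_1]$, routing inputs where $\UA[r,a_1]$ is undefined to a sink symbol adjoined to $\Sigma_V$ that no right-side assignment is permitted to use (so those edges are automatically unsatisfied). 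The parameters $|R|,|Q_1|,|Q_2|,|A_1|,|A_2|$ correspond to edge counts and alphabet sizes up to additive constants, so polynomial bounds on one side yield polynomial bounds on the other. Disjointness of $Q_1\cap Q_2$ and of $A_1\cap A_2$ required by Definition~\ref{fulldef:canonical} is enforced essentially for free by tagging elements with a side index, which only doubles the alphabet sizes.

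The one step I regard as delicate, and the main obstacle, is matching \emph{uniform} MIP randomness with the arbitrary \emph{weighted} edge distribution of a projection game. For Conjecture~\ref{fullconj16} this is benign because $|R|$ is allowed to be any polynomial and edge weights can be approximated by multiplicities with negligible error. The real subtlety arises for the Biregular strengthening invoked in Theorem~\ref{fullCSthmstrong}, where the underlying bipartite graph must additionally be left- and right-regular with $|Q_1|=|Q_2|$. Obtaining biregularity requires a padding argument adding dummy vertices and dummy edges to equalize degrees, with the dummy edges designed so that cheating provers cannot exploit them to inflate acceptance probability---e.g.\ by letting each dummy edge carry a projection satisfiable only by a fresh symbol not shared with any real edge. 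A standard averaging argument then recovers the claimed soundness up to lower-order terms, and the parameters remain polynomial in $N$ and $1/\eps$, yielding Conjecture~\ref{fullconj16} (and, after the extra padding, its biregular variant).
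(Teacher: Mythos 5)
Since the statement is a conjecture, the only thing to check is your justification that it is a faithful reformulation of Moshkovitz's PGC, and your overall route is the same as the paper's appendix: from a projection game, build the MIP whose randomness is a uniformly chosen edge and which accepts $(a_1,a_2)$ iff $\pi_r(a_1)=a_2$ (functionality is immediate), and conversely define $\pi_r(a_1)=\UA[r,a_1]$. The direction actually needed to derive Conjecture~\ref{fullconj16} from the PGC (game $\Rightarrow$ MIP) is fine in your write-up, including the parameter bookkeeping. The converse direction, however, has a genuine flaw as you state it: having fixed projection games with \emph{total} maps $\pi_e\colon\Sigma_U\to\Sigma_V$ and assignments ranging over all of $\Sigma_V$, you cannot then decree ``a sink symbol that no right-side assignment is permitted to use.'' If the sink is a legal symbol, a cheating assignment may set $\sigma_V(v)$ to the sink, and every edge $(u,v)$ whose left answer $a_1$ has $\UA[r,a_1]$ undefined becomes satisfied, even though the original MIP verifier rejects all answers on those random strings; so soundness need not transfer. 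The paper sidesteps exactly this by letting $\pi_r$ be a \emph{partial} function (flagged in a footnote as a deliberate deviation from Moshkovitz's total-function definition, made precisely to get an exact equivalence); with partial projections your construction becomes correct with no sink symbol at all.

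Two smaller points. First, the paper's projection games are defined over bipartite \emph{multigraphs}, so there is no weighted-edge-distribution issue to approximate away; uniform choice of an edge is already the right verifier distribution. Second, your closing claim that biregularity (uniformity plus $|Q_1|=|Q_2|$) can be \emph{derived} by padding with dummy vertices and fresh-symbol-satisfiable dummy edges runs against the paper and is itself suspect: dummy edges satisfiable by fresh symbols are satisfiable by cheating provers too, so if equalizing degrees requires many of them the value in the unsatisfiable case is inflated by their fraction, and no ``lower-order terms'' bound is available in general. This is why the paper states the Biregular PGC (Conjecture~\ref{fullconjecture2}) as a separate, slightly stronger assumption, merely observing that known reductions happen to produce biregular graphs, rather than claiming it follows from Conjecture~\ref{fullconj16}. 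That overreach does not affect the statement at hand, but you should not present it as part of the equivalence.
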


We will need a slight strengthening of the PGC that requires that $\SAT$ be efficiently reduced to a 2-prover MIP
that satisfies uniformity in addition to functionality. 
We note that in the language of projection games, the uniformity condition is equivalent to requiring the bipartite graph underlying
the projection game be biregular. Moreover, the most efficient known reductions from $\SAT$ to projection games do produce biregular graphs
(and hence 2-prover MIP's satisfying uniformity) \cite{steurer}.

\begin{conjecture}[Biregular PGC]
\label{fullconjecture2}
Conjecture \ref{fullconj16} holds in which the MIP satisfies uniformity as well.
\end{conjecture}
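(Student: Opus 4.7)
The Biregular PGC asks for a 2-prover MIP for $\SAT$ satisfying functionality and also uniformity (equivalently, that the bipartite query graph be biregular), with the same polynomial parameter bounds promised by the PGC. Since the PGC (Conjecture \ref{fullconj16}) is itself a major open problem, my plan is not to prove the Biregular PGC from scratch but rather to establish it as a low-cost strengthening of the PGC: first give a generic reduction that adds uniformity on top of the PGC, and separately note that known partial constructions already produce biregular graphs.

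For the generic reduction, my first step would be to view the MIP guaranteed by Conjecture \ref{fullconj16} as a weighted bipartite graph $G = (Q_1 \cup Q_2, E, w)$, where each random string $r \in R$ contributes an edge $(q[r,1], q[r,2])$ of weight $1/|R|$; uniformity is then the statement that every vertex on each side carries equal total weight. Since the weights are rational with denominator $|R|$, I can replace $G$ by an unweighted multigraph by scaling. The second step is a degree-equalization: for each vertex whose degree is below the target value $D_i$ on side $i$, I would pad with ``dummy'' edges drawn from a small set of question/answer pairs on which functionality is preserved. To control soundness, the dummy-edge fraction must be $O(\eps)$, which forces $D_i$ to exceed the maximum degree of $G$ by at most a $(1+O(\eps))$ factor; this is achievable by inflating $|R|$ by a $\poly(N, 1/\eps)$ factor. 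The third step is to equalize $|Q_1|$ and $|Q_2|$ by taking disjoint copies of the smaller side, inheriting the answer function $\UA$ on each copy so that functionality is maintained.

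The more promising route, as the excerpt itself hints, is white-box rather than black-box: the state-of-the-art reductions from $\SAT$ to projection games, in particular the Dinur--Steurer construction \cite{steurer}, already yield a biregular bipartite graph as a by-product of the combinatorial PCP construction. My plan here would be to audit that construction end-to-end, verifying that (i) the bipartite graph it produces on the question sets $Q_1, Q_2$ is indeed biregular with $|Q_1| = |Q_2|$, and (ii) the functionality property of Definition \ref{fulldef:canonical} is preserved through each composition and gap-amplification step. If so, any future construction that closes the remaining gap to the full PGC along the same lines will automatically satisfy uniformity.

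The main obstacle is not biregularity itself but the PGC: we do not currently know how to construct polynomial-length PCPs with two queries, inverse-polynomial soundness, and polynomial-size alphabet. Biregularity is combinatorially cheap relative to this hurdle, so the substantive research effort must go into Conjecture \ref{fullconj16}, with uniformity added either as a final polishing step via the reduction above or, more naturally, inherited from the structure of whichever PCP construction eventually resolves the PGC.
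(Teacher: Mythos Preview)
The statement you are trying to prove is a \emph{conjecture}, not a theorem, and the paper offers no proof of it whatsoever. Conjecture~\ref{fullconjecture2} is stated purely as a hypothesis under which Theorem~\ref{fullCSthmstrong} and Proposition~\ref{fullthm:strongformal} are derived. The only supporting remark the paper makes is the observation that the most efficient known reductions from $\SAT$ to projection games (e.g., Dinur--Steurer \cite{steurer}) already produce biregular graphs, so the added biregularity requirement is plausible; the paper does \emph{not} attempt to derive the Biregular PGC from the ordinary PGC. So there is nothing to compare your argument against: you have written a proof proposal for a statement the paper deliberately leaves open.

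Setting aside that mismatch, your black-box reduction from PGC to Biregular PGC has a real gap. The PGC gives you no control over the degree distribution of the bipartite graph $G$: nothing prevents a single vertex in $Q_1$ from being incident to a $\Theta(1)$ fraction of all edges while most other vertices have degree $1$. In that case, padding every low-degree vertex up to the maximum degree requires adding $\Omega(|R|)$ dummy edges, not an $O(\eps)$ fraction, and your soundness argument collapses. Your sentence ``this is achievable by inflating $|R|$ by a $\poly(N,1/\eps)$ factor'' is the crux and is unjustified: inflating $|R|$ uniformly does not change degree ratios. A correct black-box argument would have to combine padding with pruning of high-degree vertices (or duplication of low-degree ones) while simultaneously preserving perfect completeness and functionality, and it is not at all clear this can be done generically. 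Your second, white-box route---verifying biregularity directly in a specific PCP construction---is exactly what the paper points to, but that is evidence for the conjecture, not a proof of it.
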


Proposition \ref{fullthm:strongformal} below describes a strengthening of Proposition \ref{fullthm:weak} that holds under the Biregular PGC. Theorem \ref{fullCSthmstrong} then follows by the argument of Section \ref{fullsec:proofoverview}, using Proposition \ref{fullthm:transform} in place of Proposition \ref{fullthm:strongformal}.

\begin{proposition}
\label{fullthm:strongformal}
Assuming Conjecture \ref{fullconjecture2}, 
the following holds for some
pair of values
$\sigma=\sigma(p), \Delta=\Delta(p)$, both in $p^{\Omega(1)}$.

If $\SAT \not \in \scBPP$, then
there are a  polynomial $m=m(p)$ and $k=k(p)$
such that no $\scBPP$ algorithm distinguishes the following two cases, given 
an  $m \times p$ Boolean matrix $B$:

\begin{enumerate}
\item There is an $\bfx \in \{0, 1\}^p$ such that $B\bfx=\bfe^{(m)}$ and $\|\bfx\|_0 \leq k$.
\item For all $\bfx \in  \mathbb{R}^p$ such that $B\bfx \in
\bfB_{\Delta}(\bfe^{(m)})$, $\|\bfx\|_0 \geq k \cdot
\sigma$.
\end{enumerate}
\end{proposition}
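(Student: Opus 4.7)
The plan is to rerun the argument used to prove Proposition \ref{fullthm:weak}, but this time feeding the MIP construction $\alg''$ with an MIP obtained from the Biregular PGC (Conjecture \ref{fullconjecture2}) rather than from the PCP theorem combined with Raz parallel repetition. The essential point is that the Biregular PGC already guarantees both functionality and uniformity, and it delivers an MIP for $\SAT$ whose soundness error is inverse-polynomial in $N$ while keeping $|R|, |Q_1|, |Q_2|, |A_1|, |A_2|$ polynomially bounded in $N$.

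First I would promote the MIP supplied by Conjecture \ref{fullconjecture2} to a \emph{canonical} MIP in the sense of Definition \ref{fulldef:canonical}. Functionality and uniformity are given; the remaining two conditions are entirely cosmetic. Equality of question-space sizes is achieved by padding the smaller question set $Q_i$ with dummy questions that are never asked with positive probability (since uniformity is only required for the questions actually sent, we replace this by rescaling so both provers are queried uniformly from question sets of equal size without changing the protocol's behavior) -- or, more simply, by observing that Moshkovitz-style reductions already produce biregular bipartite graphs in which both sides can be taken of equal size by a trivial balancing. Disjointness of question spaces and of answer spaces is obtained by relabeling via tagging each symbol with the index of the prover it belongs to. None of these modifications affect soundness, completeness, or the polynomial bounds on $|R|, |Q_i|, |A_i|$.

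Next I would invoke the algorithm $\alg''$ from Section \ref{fullsec:functionality} with this canonical MIP and with a parameter $\ell = \ell(N) = N^{c/3}$, where $c > 0$ is the constant from Conjecture \ref{fullconjecture2} guaranteeing $\eps_{\text{sound}} \le 1/N^c$. Lemma \ref{fulllemma:setsystem} supplies the required $\Delta$-useful set system of size $|S| = O(\ell^2 \log |A_2|) = \poly(N)$ with $\Delta = \Omega(|S|) = N^{\Omega(1)}$, in randomized polynomial time with high probability. The resulting matrix $B$ has $m = |R| \cdot |S|$ rows and $p = |Q_1||A_1| + |Q_2||A_2|$ columns, both $\poly(N)$. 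Property 1 (Proposition \ref{fullprop:completeness}) gives, in the $\phi \in \SAT$ case, a $\{0,1\}$-solution of sparsity $k := |Q_1| + |Q_2|$. Property 2 (Proposition \ref{fullprop:sound}) gives, in the $\phi \notin \SAT$ case, a sparsity lower bound of $(1 - \eps_{\text{sound}} \cdot \ell^2) \cdot \tfrac{\ell}{2} \cdot k$. With our choice $\ell = N^{c/3}$ we have $\eps_{\text{sound}} \cdot \ell^2 \le N^{-c/3} = o(1)$, so the sparsity gap is $\sigma = \Omega(\ell) = N^{\Omega(1)} = p^{\Omega(1)}$, and $\Delta = \Omega(|S|) = N^{\Omega(1)} = p^{\Omega(1)}$, as required.

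Finally, any $\scBPP$ algorithm distinguishing cases 1 and 2 for the matrix $B$ would, when composed with the randomized reduction $\alg''$, yield a $\scBPP$ algorithm for $\SAT$, contradicting the hypothesis $\SAT \notin \scBPP$. The only delicate step is verifying that the Biregular PGC's parameters ($|R|, |Q_i|, |A_i|$ polynomial in $N$, soundness $1/\poly(N)$) interact correctly with the choice of $\ell$ to drive $\sigma$ and $\Delta$ into $p^{\Omega(1)}$; once the canonical MIP is assembled, the remainder is a direct replay of the proof of Proposition \ref{fullthm:weak}. I expect the main conceptual obstacle to be the canonicalization step: ensuring that forcing equality of question-space sizes and disjointness preserves functionality and uniformity exactly, so that the lemmas of Section \ref{fullsec:functionality} (especially Sublemmas \ref{fulllemma:good}, \ref{fulllemma:strategy}, and \ref{fulllemma:claim2}) apply verbatim.
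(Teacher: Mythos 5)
Your overall strategy is exactly the paper's: promote the Biregular PGC's MIP to a canonical one, rerun the construction $\alg''$ from the proof of Proposition \ref{fullthm:weak} with $\ell$ chosen so that $\eps_{\text{sound}}\cdot\ell^2=o(1)$, and convert any distinguishing $\scBPP$ algorithm into a $\scBPP$ algorithm for $\SAT$. The parameter bookkeeping ($m,p=\poly(N)$, $\sigma,\Delta = p^{\Omega(1)}$, your $\ell=N^{c/3}$ versus the paper's $\ell=(10\eps_{\text{sound}})^{-1/2}$) is fine.

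The genuine gap is in the step you dismiss as ``entirely cosmetic'': enforcing equality of question-space sizes while preserving uniformity. Your first suggestion---padding the smaller $Q_i$ with dummy questions that are never asked---destroys uniformity outright: the verifier's queries are then not uniform over the enlarged $Q_i$, and uniformity over the \emph{actual} question set is precisely what Sublemma \ref{fulllemma:claim2} uses (the identity $\sum_{r\in R}\cost(q[r,i]) = \bigl[\sum_{q_i\in Q_i}\cost(q_i)\bigr]\tfrac{|R|}{|Q_i|}$ together with $|Q_1|=|Q_2|$). The parenthetical ``rescaling so both provers are queried uniformly from question sets of equal size without changing the protocol's behavior'' is not a construction: if the marginal must become uniform over a strictly larger set, the protocol's behavior must change, and you do not say how to do so without harming completeness or soundness. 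Your fallback---that biregularity already gives equal sides---is also unjustified: a biregular bipartite graph satisfies $|Q_1|d_1=|Q_2|d_2$ and can have sides of very different sizes, and ``trivial balancing'' is left unspecified. The paper closes this gap with a concrete device of Lund and Yannakakis: the verifier draws two independent query pairs $(q_1,q_2)$ and $(q_1',q_2')$, sends $(q_1,q_2')$ to prover 1 and $(q_2,q_1')$ to prover 2, the provers answer according to the first component, and the verifier checks the original test on $(q_1,q_2)$. This makes both question spaces equal to $Q_1\times Q_2$ (quadratic blowup only), keeps the query distribution uniform, and preserves functionality, completeness, and soundness, after which your replay of Propositions \ref{fullprop:completeness} and \ref{fullprop:sound} goes through verbatim. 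With that substitution (disjointness by tagging/padding is indeed harmless, as you say), your argument matches the paper's proof.
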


\begin{proof}
Conjecture \ref{fullconjecture2} implies the existence of a
perfectly complete two-prover MIP for $\SAT$ with soundness
error $\eps_{\text{sound}} = 1/N^{\Omega(1)}$ that satisfies
functionality and uniformity, with $|R|, |A_1|, |A_2|, |Q_1|, |Q_2|$ all bounded by $\poly(N)$. 
Moreover, this MIP can be transformed into an equivalent MIP that satisfies both equality of question space sizes and disjointness of answer spaces, while keeping $|R|, |A_1|, |A_2|, |Q_1|, |Q_2|$ bounded by $\poly(N)$. 
Disjointness of answer spaces can be ensured by simple padding (only one bit of padding is required). Equality of question space sizes can be ensured via a technique of Lund and Yannakakis \cite{lund}, with at most a quadratic blowup in the size of $Q_1$ and $Q_2$. Namely, 
the verifier generates 
two independent queries $(q_1, q_2)$ and $(q'_1, q'_2)$ in $Q_1 \times Q_2$. 
Then, the verifier asks the first prover the 
query $(q_1, q'_2)$ and the second prover the query $(q_2, q'_1)$. The provers ignore the second component, and answer the 
queries by answering the first component of the query according to the 
original MIP. The verifier accepts if 
and only if the verifier in the original MIP would have accepted the answers to $(q_1, q_2)$. 

Thus, we have established that Conjecture \ref{fullconjecture2} implies the existence of a perfectly complete canonical MIP for $\SAT$ with soundness error $\eps_{\text{sound}} = 1/N^{\Omega(1)}$ and for which $|R|, |A_1|, |A_2|, |Q_1|, |Q_2|$ are all bounded by $\poly(N)$. 
The remainder of the proof is now identical to that of Proposition \ref{fullthm:weak}; we provide the details for completeness. 

The proof of Proposition \ref{fullthm:weak} specified an algorithm $\alg''$ that output a Boolean $m \times p$ matrix $B$
of size polynomial in the parameters $|R|,  |A_1|, |A_2|,
|Q_1|, |Q_2|$, which are themselves polynomial in $N$
(so that $mp$ is polynomial in $N$), such that:
\begin{itemize}
\item[Property 1:] If $\phi \in \SAT$, then there is a vector $\bfx^* \in \{0, 1\}^p$ such that $B \bfx^*=\bfe^{(m)}$ and 
$\|\bfx^*\|_0 \leq |Q_1| + |Q_2|$. 
\item[Property 2:] Abusing notation, let $\ell:=\ell(N)$. If $\phi \not\in \SAT$, then any $\bfx \in \mathbb{R}^p$ such that $B\bfx \in \bfB_{\Delta}(\bfe^{(m)})$
satisfies $\|\bfx\|_0 \geq (1-\eps_{\text{sound}} \cdot \ell^2) \cdot \frac{\ell}{2} \cdot \left( |Q_1| + |Q_2|\right)$,
for some $\Delta = \Omega\left(\ell^2 \cdot \log \left( |A_1| + |A_2| \right)\right)$.
\end{itemize}

Setting $\ell = (10\cdot \eps_{\text{sound}})^{-1/2}$ and $k=|Q_1| + |Q_2|$, properties 1 and 2 imply that
\begin{itemize}
\item If $\phi \in \SAT$, then there is a vector $\bfx^* \in \{0, 1\}^p$ such that $B \bfx^*=\bfe^{(m)}$ and 
$\|\bfx^*\|_0 \leq k$. 
\item If $\phi \not\in \SAT$, then any $\bfx \in \mathbb{R}^p$ such that $B\bfx \in \bfB_{\Delta}(\bfe^{(m)})$
satisfies $\|\bfx\|_0 \geq \sigma \cdot k$, where $\Delta(p) = \Omega\left(\ell^2 \cdot \log \left( |A_1| + |A_2| \right)\right) \geq \Omega(\ell)=p^{\Omega(1)}$,
and $\sigma(p) = (1-\eps_{\text{sound}} \cdot \ell^2) \cdot
\frac{\ell}{2} \geq \Omega(\ell) = p^{\Omega(1)}$.   (Recall
that $p$ is polynomial in $N$.)
\end{itemize}

Suppose there is a randomized algorithm $\alg'$ that runs in
$\scBPTime(\poly(m, p)) = \scBPTime(\poly(N))$ 
and distinguishes between 
the above two cases. By running $\alg'$ on the matrix $B$ output by $\alg''$, we determine with high probability whether 
$\phi$ is satisfiable. Thus, if $\SAT \not\in \scBPTime(\poly(N))$, no such algorithm $\alg'$ can exist.
\end{proof}

\subsection{Proof of Proposition \ref{fullthm:transform}}\label{fullprop7}
\begin{proof}
Let $g(\cdot)$ be any function and let $h(m, p) = p^{C_1} \cdot m^{1-C_2}$ for some positive constants $C_1, C_2$. 
Let $\alg$ be any algorithm for \gSR\ that runs in randomized time $T(m, p)$. Then on any $m \times p$ input matrix $B$
such that there exists a vector $\bfx^*$ with $\|\bfx^*\|_0 \le k$, $\alg$ outputs a vector $\bfx$ with $\|\bfx\|_0 \leq g(p) \cdot k$
satisfying $\|B\bfx-\bfe^{(m)}\|^2 \leq  p^{C_1} \cdot m^{1-C_2}$. 

We show how to transform $\alg$ into an algorithm $\alg'$ for the $(g, 1)$-\textsc{Sparse Regression} problem, such that $\alg'$ runs in time $T(p^{C_1/C_2} \cdot m^{1/C_2}, p)$. 

Let $r= \left \lceil \left(p^{C_1} \cdot
m^{1-C_2}\right)^{1/C_2}\right \rceil$.  Let $B'$ denote the $(mr)
\times p$ matrix obtained by stacking $r$ copies of $B$ atop each other. The algorithm $\alg'$ simply runs $\alg$ on $B'$, to obtain an approximate solution $\bfx$ to the system of linear equations given by $B' \bfx = \bfe^{(rm)}$, and outputs $\bfx$. 
Notice that the running time of $\alg'$ is indeed $O(T(p^{C_1/C_2} \cdot m^{1/C_2}, p) = O\left(T\left(\poly\left(m, p\right), p\right)\right)$.

\medskip 
\noindent \textbf{Analysis of $\bfx$}: Since $B\bfx^* = \mathbf{e}^{(m)}$, it also holds that $B'\bfx^* = \bfe^{(rm)}$. That is, the system $B'\bfx = \mathbf{e}^{(rm)}$ has an exact solution of sparsity $k$. Hence, $\alg$
must output a vector $\bfx$ with $\|\bfx\|_0 \leq g(p) \cdot k$, satisfying
$\|B'\bfx-\bfe^{(rm)}\|^2 \leq  p^{C_1} \cdot (rm)^{1-C_2}$. 
Notice that:
$$\|B'\bfx-\bfe^{(rm)}\|^2  = r \cdot \|B\bfx-\bfe^{(m)}\|^2.$$ Thus,
$$\|B\bfx-\bfe^{(m)}\|^2 =(1/r) \cdot \left [
\|B'\bfx-\bfe^{(rm)}\|^2\right ]   \leq  (1/r) \left [
p^{C_1} \cdot (rm)^{1-C_2}\right ] =  p^{C_1} \cdot m^{1-C_2}
\cdot r^{-C_2} \le 1.$$
This completes the proof.  
\end{proof}

\section{The Statistical Problem}\label{fullnoisy}
\subsection{Motivation and Prior Work}\label{fullnoisymotivation}
In this section we motivate the
\nSR\ problem introduced in Section \ref{fullsec:intro}. 
The problem \nSR\ is a sparse variant of a 
less challenging problem known as \textsc{Noisy Regression} that is popular in the 
statistics literature. In \textsc{Noisy Regression}, instead of being given a fixed 
matrix $B$ for which we seek a sparse vector $\bfx$ such that $B\bfx
\approx \bfe$, we assume corrupting random noise injected into the problem.  
Specifically, we assume that there are (1) a known $m \times p$
matrix $X$ (which plays the role of $B$) and (2) an unknown real $p$-dimensional vector $\bftheta$.  An
$m$-dimensional vector $\bfeps$ of random noise is generated, with the
$\epsilon_i$'s being i.i.d. $N(0,1)$ random variables.  The vector
$\bfy=X\bftheta+\bfeps$ (but not $\bfeps$ itself) is then revealed to the algorithm (along
with $X$, which it already knew).  
On an instance specified by $(X,\bftheta)$ with $\bfy$ revealed to the algorithm, an algorithm will produce $\hat \bftheta=\hat
\bftheta(X,\bfy)$.  We define $\hat{\bfy}=X\hat \bftheta$.  The instance's {\em prediction loss} is defined to be $\|\hat{\bfy}-E[\bfy]\|^2=\|X(\hat \bftheta-\bftheta)\|^2$.  
(By ``$E[\bfy]$'' here, since $\bfy=X\bftheta+\bfeps$, we mean the {\em vector} $X\bftheta$.)
Its expected value $E[\|X(\hat \bftheta-\bftheta)\|^2]$, over the random vector $\bfeps$, 
is defined to be the {\em risk}.  

The goal is to make the risk as small as possible.  However, as $\bftheta$ is unknown, it is hard to measure the performance of an algorithm at a 
single $\bftheta$.  After all, an algorithm which luckily guesses $\bftheta$ and then sets $\hat \bftheta=\bftheta$ will have zero risk for that $\bftheta$.  For
this reason, one usually seeks an algorithm that minimizes the supremum of the risk over all $\bftheta$.  In other words, this {\em minimax estimator}
finds $\hat \bftheta$ to minimize the supremum over $\bftheta$ of $E[\|X(\hat \bftheta-\bftheta)\|^2]$.
\textsc{Noisy Regression} refers to the  problem of minimizing this last supremum.

Note here that we are using random noise 
whose components have variance 1, regardless of the
magnitudes of the entries of $X$. For our hard instances, the entries of $X$ are 0-1.

It is well-known that ordinary least squares is a minimax
estimator: least squares achieves risk $p$, \emph{independently}
of $m$ \cite[bottom of page 1950]{FG}.
Moreover, no estimator, polynomial-time or not, achieves smaller maximum risk.
(It's not even {\em a priori} obvious that the risk can be 
made independent of $m$.)

The problem \nSR\ defined in Section \ref{fullsec:intro} is
a 
``sparse version'' of {\sc Noisy Regression} which is
identical to {\sc Noisy Regression} except that there
is a positive integer $k$, known to the algorithm, such that
$||\bftheta||_0\le k$.
The goal of the algorithm is to find an estimate
$\hat \bftheta$ of $\bftheta$ for which both the supremum over $\bftheta$ of the sparsity
$||\hat \bftheta||_0$ and the supremum over $\bftheta$ of the risk 
$E[\|X(\hat \bftheta-\bftheta)\|^2]$ are as small as possible.

As was the case for \gSR, there is a naive exponential-time algorithm for {\sc Sparse
Noisy Regression}.  Namely, try all subsets $S$ of $\{1,2,...,p\}$ of
size $k$.  For each $S$, 
use ordinary least squares to find the $\hat \bftheta_j$'s, except require that $\hat \bftheta_j=0$ for all $j\not
\in S$.  An upper bound on the risk of this algorithm is known:
\begin{theorem} \cite[page 1962]{FG}
For any $\bftheta$, the risk of the naive exponential-time algorithm is bounded by
$4k\ln p$.  
\end{theorem}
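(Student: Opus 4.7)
The plan is to establish an oracle-type inequality for the naive algorithm and then control the resulting stochastic terms by a union bound over the $\binom{p}{k} \le p^{k}$ size-$k$ subsets. Throughout, set $\mu = X\bftheta$ and, for any size-$k$ subset $S$, let $P_S$ denote the orthogonal projection onto the column span $V_S$ of $X_S$; the restricted least-squares estimator returns $X\hat\bftheta_S = P_S\bfy = P_S\mu + P_S\bfeps$. Since $(I-P_S)\mu \in V_S^\perp$ is orthogonal to $P_S\bfeps \in V_S$, the Pythagorean identity yields the clean bias--variance decomposition
\[
\|X(\hat\bftheta_S - \bftheta)\|^2 \;=\; \|(I-P_S)\mu\|^2 \;+\; \|P_S\bfeps\|^2.
\]

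Next I would exploit the optimality of the naive algorithm's choice $\hat S$. Let $S^*$ be the true support (of size at most $k$). Then $\|\bfy - P_{\hat S}\bfy\|^2 \le \|\bfy - P_{S^*}\bfy\|^2 = \|(I - P_{S^*})\bfeps\|^2$, the last equality using $\mu \in V_{S^*}$. Expanding the left side via $\bfy = \mu + \bfeps$ and cancelling $\|\bfeps\|^2$ from both sides gives
\[
\|(I-P_{\hat S})\mu\|^2 \;\le\; \|P_{\hat S}\bfeps\|^2 \;-\; \|P_{S^*}\bfeps\|^2 \;+\; 2\bigl\langle (I-P_{\hat S})\mu,\, \bfeps\bigr\rangle.
\]
I would then apply the AM--GM step $2\|u\|\cdot|\langle v,\bfeps\rangle| \le \tfrac12\|u\|^2 + 2\langle v,\bfeps\rangle^2$ to the cross term, with $u = (I-P_{\hat S})\mu$ and $\hat v = u/\|u\|$, absorb the $\tfrac12\|u\|^2$ term into the left side, and substitute back into the decomposition to obtain the pointwise oracle inequality
\[
\|X(\hat\bftheta_{\hat S}-\bftheta)\|^2 \;\le\; 3\|P_{\hat S}\bfeps\|^2 \;+\; 4\langle \hat v,\bfeps\rangle^2 \;\le\; 3\max_{S}\|P_S\bfeps\|^2 \;+\; 4\max_{S}\langle v_S,\bfeps\rangle^2,
\]
where $v_S := (I-P_S)\mu/\|(I-P_S)\mu\|$ is a \emph{fixed} (non-random) unit vector for each $S$.

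The final step is Gaussian concentration: for each fixed $S$, $\|P_S\bfeps\|^2$ is $\chi^2$ with at most $k$ degrees of freedom, and $\langle v_S,\bfeps\rangle \sim N(0,1)$. The Laurent--Massart tail bound $\Pr[\chi^2_d \ge d + 2\sqrt{dt} + 2t] \le e^{-t}$, combined with a union bound over the $\binom{p}{k}\le p^{k}$ subsets and integration of the tail probability, shows that both $E[\max_S\|P_S\bfeps\|^2]$ and $E[\max_S\langle v_S,\bfeps\rangle^2]$ are of order $k\ln p$, yielding risk $O(k\ln p)$. The main obstacle to recovering the tight constant $4$ rather than the larger constant produced by the AM--GM absorption and the crude union bound above is the statistical coupling between $\hat S$ and $\bfeps$; sharpening the constant requires replacing the factor-3 Cauchy--Schwarz step with a more delicate Birg\'e--Massart penalized-selection argument tailored to the fixed-dimension case $|S|=k$, so that the projection norm and the bias cross term can be bounded simultaneously (rather than separately via two independent union bounds).
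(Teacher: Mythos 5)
There is a genuine gap between what you prove and what the statement asserts, and you essentially concede it in your last sentence. Note first that the paper itself gives no proof of this theorem: it is imported verbatim from Foster and George \cite{FG}, so the only question is whether your argument reproduces the stated bound. It does not. Your chain of inequalities is sound as far as it goes: the bias--variance decomposition, the basic inequality from optimality of $\hat S$ (modulo the sign of the cross term, which is harmless since you bound it in absolute value, and the fact that you silently discard the helpful $-\|P_{S^*}\bfeps\|^2$ term), the AM--GM absorption giving loss at most $3\max_S\|P_S\bfeps\|^2+4\max_S\langle v_S,\bfeps\rangle^2$, and the observation that each $v_S$ is deterministic. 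But when you then union-bound over the at most $p^k$ subsets, the Laurent--Massart tail gives $E[\max_S\|P_S\bfeps\|^2]\approx 2k\ln p$ to leading order and likewise $E[\max_S\langle v_S,\bfeps\rangle^2]\approx 2k\ln p$, so your argument yields a risk bound of roughly $14\,k\ln p$, not $4k\ln p$. In other words you have proved ``risk $\le C\,k\ln p$ for some absolute constant $C$,'' which is weaker than the theorem as stated; the passage from a loose constant to the constant $4$ is precisely the content of the Foster--George argument you defer to (``a more delicate Birg\'e--Massart penalized-selection argument''), and no sketch of that step is given.

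A secondary, smaller issue: even the order-of-magnitude conclusion is left at the level of ``integration of the tail probability shows both maxima are of order $k\ln p$,'' so the constants in your own bound are never pinned down. For the role this theorem plays in the paper (background contrast with the hardness results), the weaker $O(k\ln p)$ statement would in fact suffice, but as a proof of the quoted theorem the proposal is incomplete.
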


The question we are interested in is, how small can a {\em polynomial-time} algorithm make
the risk, if it is allowed to ``cheat'' somewhat on the
sparsity?  More formally, if a polynomial-time algorithm is
allowed to output a vector $\hat \bftheta$ with $\|\hat \bftheta\|_0\le
k\cdot g(k)$ (where $g(k)\ge 1$ is an
arbitrary function), how small can its risk be?

Theorem \ref{fullstatthm} (respectively, Theorem
\ref{fullstatthmstrong}) from Section \ref{fullsec:intro} assert that the risk cannot be bounded above by $p^{C_1} \cdot m^{1-C_2}$ for
\emph{any} positive constants $C_1, C_2$, even if one is allowed to cheat by a nearly polynomial (respectively, polynomial) factor on the sparsity of the returned vector. This is in stark contrast
to {\sc Noisy Regression}, where there is a polynomial-time algorithm (least-squares regression)
that achieves risk $p$, independently of $m$.

Now we can describe the result of Zhang, Wainwright, and Jordan
\cite{ZWJ}.  There is a standard (non-greedy)
algorithm known as LASSO which is known to achieve risk bounded
by $O((1/\gamma^2(B)) (k\log p))$, where $\gamma(B)$ is $B$'s
``restricted eigenvalue constant'' \cite{ZWJ}.  Zhang,
Wainwright, and Jordan prove that, unless NP$\subseteq$P/poly,
for any $\delta>0$
any polynomial-time algorithm has risk
$\Omega((1/\gamma^2(B))(k^{1-\delta}\log p))$, for some matrices
$B$ (for which $\gamma(B)$ can be made arbitrarily
small).  To contrast their results with ours, \cite{ZWJ} requires
a different complexity assumption and proves a different lower
bound (involving $\gamma(B)$, which ours does not), but most
importantly, their bound only applies to algorithms that return
linear combinations of $k$ columns, whereas ours allows linear
combinations of $2^{O(\log^{(1-\delta)}p)} \cdot k$ columns.

\subsection{Proof of Theorems \ref{fullstatthm} and \ref{fullstatthmstrong}}
%
%
%
%
%
%

\begin{proposition}  \label{fullthm:stattocs}
For every pair $g(\cdot), h(\cdot, \cdot)$ of polynomials, if
there is an algorithm for $(g, h/2)$-\textsc{Noisy Sparse Regression} that runs in time $T(m, p)$,
then there is an algorithm for \gSR\ that runs in time $O(T(m, p))$.
\end{proposition}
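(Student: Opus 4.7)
The plan is to reduce \gSR\ to \nSR\ by having the reduction add Gaussian noise to the target vector on its own, and then invoke the assumed algorithm in a black-box manner. Given a valid instance $(B, k)$ of \gSR, with an unknown $k$-sparse vector $\bfx^*$ satisfying $B \bfx^* = \bfe^{(m)}$, I would draw an $m$-dimensional vector $\bfeps$ of i.i.d.\ $N(0,1)$ components, set $\bfy := \bfe^{(m)} + \bfeps$, and call the hypothesized $(g, h/2)$-\nSR\ algorithm on the triple $(B, k, \bfy)$. Because $B \bfx^* = \bfe^{(m)}$, we have $\bfy = B \bfx^* + \bfeps$, so $(B, k, \bfy)$ is distributed exactly as a valid \nSR\ input whose hidden sparse ground truth is $\bfx^*$. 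The noisy algorithm must therefore return a vector $\hat\bfx$ with $\|\hat\bfx\|_0 \leq k \cdot g(p)$ and $E[\|B(\hat\bfx - \bfx^*)\|^2] \leq h(m,p)/2$, where the expectation is over both the simulated noise and the algorithm's internal coins.

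Next, I would convert this expected-error bound into the deterministic-error bound required by \gSR. Since $B \bfx^* = \bfe^{(m)}$, the guarantee rewrites as $E[\|B \hat\bfx - \bfe^{(m)}\|^2] \leq h(m,p)/2$, so by Markov's inequality, $\|B \hat\bfx - \bfe^{(m)}\|^2 \leq h(m, p)$ holds with probability at least $1/2$. To boost the success probability to any desired constant bounded away from 1 (i.e., to satisfy the "with high probability" requirement in \gSR), I would run the reduction a constant number of times with fresh randomness, directly compute $\|B \hat\bfx - \bfe^{(m)}\|^2$ for each returned vector (which is easy since $B$ and $\bfe^{(m)}$ are known and this takes $O(mp) = O(T(m,p))$ time per trial), and output the $\hat\bfx$ that minimizes this quantity. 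The sparsity bound $\|\hat\bfx\|_0 \leq k \cdot g(p)$ is preserved by the selection step, and the total running time is $O(T(m, p))$ as claimed.

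There is really no substantive obstacle in this proof: the $h/2$ in the hypothesis of the proposition is precisely calibrated so that one application of Markov's inequality converts the expected error guarantee of \nSR\ into the deterministic-error guarantee of \gSR, and the ability of the reduction to generate its own Gaussian noise means that the worst-case \gSR\ instance embeds verbatim into the average-case \nSR\ setting. The only things to verify carefully are that (i) $\bfy$ as constructed is statistically identical to the distribution of inputs that the \nSR\ algorithm is promised, which is immediate from $B \bfx^* = \bfe^{(m)}$, and (ii) the same function $g(\cdot)$ governs the sparsity in both problems, so the sparsity guarantee transfers without modification.
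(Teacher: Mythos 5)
Your proposal is correct and follows essentially the same route as the paper's proof: simulate the Gaussian noise, invoke the noisy algorithm, apply Markov's inequality to the expected-error guarantee (which is exactly why the hypothesis has $h/2$), and amplify by repeating and checking $\|B\hat\bfx - \bfe^{(m)}\|^2$ directly. The only cosmetic difference is that the paper repeats $\lceil \log(1/\delta)\rceil$ times to drive the failure probability below an arbitrary $\delta$, while you repeat a constant number of times and keep the best candidate, which is an equivalent amplification.
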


Combining Proposition \ref{fullthm:stattocs} with Theorem \ref{fullCSthm} proves Theorem \ref{fullstatthm}
and combining Proposition \ref{fullthm:stattocs} with Theorem \ref{fullCSthmstrong} proves Theorem \ref{fullstatthmstrong}.

\begin{proof}
Let $\alg$ be an algorithm for $(g, h/2)$-\textsc{Noisy Sparse Regression}. The following algorithm $\alg'$ 
solves \gSR, with failure probability at most $\delta$.

$\alg'$ does the following $\lceil
\log(1/\delta)\rceil $ times: it generates an $m$-dimensional
vector $\bfeps$ whose components are (approximately) i.i.d.
$N(0,1)$ random variables and sets $\bfy=\bfe^{(m)}+\bfeps$. (We will assume that the algorithm generates $N(0,1)$ random variables exactly.)
It then runs $\alg$ on input $(B, k, \bfy)$ to obtain a vector
$\bfx$ of sparsity at most $k\cdot g(p)$, and checks whether $\|B \bfx - \bfe^{(m)}\|^2 \leq
h(m, p)$. If so, it halts and outputs $\bfx$. If $\alg'$ has not
halted after $\lceil \lg (1/\delta)\rceil$ iterations of the above procedure, $\alg'$ outputs a special failure symbol $\perp$.

We must argue, given an $m \times p$ input matrix $B$ such that there is a $p$-dimensional
vector $\bfx^*$ satisfying $\|\bfx^*\|_0 \le k$ and $B
\bfx^*=\bfe^{(m)}$, that $\alg'$ outputs
a vector $\bfx$ of sparsity at most $k \cdot g(p)$ satisfying
$\|B \bfx - \bfe^{(m)}\|^2 \leq h(m, p)$, with probability at least $1-\delta$.
Clearly if $\alg'$ does not output $\perp$, then the vector
$\bfx$ that it outputs satisfies the required conditions. We argue
that the probability $\alg'$ outputs $\perp$ is at most
$1/2^{\lceil\log(1/\delta)\rceil } \le \delta$.

Since $\bfe^{(m)}=B\bfx^*$, each invocation of $\alg$ must
return a vector $\bfx$ which must satisfy
(a) $\|\bfx\|_0\le k\cdot g(p)$ and (b) $E[\|B(\bfx-\bfx^*)\|^2] = E[\|B\bfx-\bfe^{(m)}\|^2] \le h(m, p)/2$, where the expectation is taken over the internal
randomness of $\alg$ and the random choice of $\bfeps$. By Markov's inequality, the probability that $\|B\bfx-\bfe^{(m)}\|^2 >h(m, p)$ is at most $1/2$. Hence,
the probability that $\alg$ returns a vector $\bfx$ such that
$\|B(\bfx-\bfx^*)\|^2 >h(m, p)$ during $\lceil
\log(1/\delta)\rceil $ consecutive invocations is bounded above
by $1/2^{\lceil \log(1/\delta)\rceil}$. This completes the proof of the theorem.
\end{proof}

\section{Acknowledgments}
The authors thank 
Dana Moshkowitz and Subhash Khot for their help with multiple prover proof systems.


\bibliographystyle{abbrv}
\bibliography{srrefs}
\appendix

\section{Equivalence Between Projection Games and MIP's Satisfying Functionality}
\label{fullapp:projectiongames}
Formally, a projection game consists of (a) a bipartite multigraph $G
= (Q_1, Q_2, R)$, (b) finite alphabets $A_1,A_2$, and
(c) projection constraints, specified via a partial function $\pi_r\colon A_1\rightarrow A_2$
for every edge $r \in R$.\footnote{In \cite{projectiongames}, Moshkovitz
defines $\pi_r$ to be a total function. We allow $\pi_r$ to be a partial function in accordance with
standard definitions \cite{hochbaum}, and to ensure a precise equivalence between projection games
and 2-prover MIP's satisfying functionality.}
 The goal is to
find assignments $P_1 \colon Q_1 \rightarrow A_1$ and
$P_2 \colon Q_2 \rightarrow A_2$ 
that satisfy as many edges as possible, where $(P_1,P_2)$
is said to satisfy an edge $r=(q_1, q_2)$ if
$\pi_r(P_1(q_1))=P_2(q_2)$. The \emph{size of the projection
game} is defined to be $n=|Q_1|+|Q_2|+|R|$, and the \emph{size of the
alphabet of the projection game} is defined to be $\max\{|A_1|, |A_2|\}$.
The {\em acceptance probability} of the projection game is the maximum, over $P_1:Q_1\rightarrow A_1$ and $P_2: Q_2\rightarrow A_2$, of the fraction of 
edges that are satisfied by the pair $(P_1,P_2)$ of provers.

Equivalence between projection games and 2-prover MIP's satisfying functionality should be clear, except possibly
for the $\pi_r$'s,
in light of the letters used in the definition of a projection game.  
In an MIP satisfying functionality, for each $r$ and $a_1$, there is at most one $a_2$ such that $\cV(r,a_1,a_2)=1$. 
Given $r$ and $a_1$, if $a_2$ exists with $\cV(r,a_1,a_2)=1$, then $a_2$ is unique; define $\pi_r(a_1)=a_2$ in this case. If no such $a_2$ exists, leave $\pi_r(a_1)$ undefined. 
Conversely, if $\pi_r(a_1)=a_2$, 
let $\cV(r,a_1,a_2)=1$ and let $\cV(r,a_1,a)=0$ for all other $a\in A_2$, and if $\pi_r(a_1)$ is undefined, let $\cV(r,a_1,a)=0$
for all $a\in A_2$.  In this way we have defined $\cV(r,a_1,a_2)$ for all $r\in R$, $a_1\in A_1$, and $a_2\in A_2$.

In \cite{projection games}, Moshkovitz states the Projection Games Conjecture as follows.\footnote{Technically, Moshkovitz's PGC is slightly stronger than our Conjecture \ref{fullconjecture}
in that size of the projection game is required to be $n=N^{1+o(1)} \poly(1/\eps)$, rather than $\poly(N, 1/\eps)$. She states the weaker version in a footnote, and the weaker version suffices for our purposes.}

\begin{conjecture}[Projection Games Conjecture (PGC) \cite{projection games}]
\label{fullconjecture}
There exists a $c>0$ such that for all $N$ and $\eps \geq 1/n^c$, $\SAT$
on inputs of size $N$ can be efficiently reduced to projection
games of size $n=\poly(N, 1/\eps)$ over an alphabet of size
$\poly(1/\eps)$, such that a $\SAT$ input which is satisfiable maps to a projection game of acceptance probability 1
and a $\SAT$ input which is not satisfiable maps to a projection game of acceptance probability at most $\eps$.  
\end{conjecture}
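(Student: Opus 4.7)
Conjecture \ref{fullconjecture} is the well-known Projection Games Conjecture of Moshkovitz, which remains open. My proof proposal can therefore only outline the natural line of attack and identify where the difficulty lies. The plan is to start from a standard PCP for $\SAT$ and amplify soundness down to $\eps \geq 1/n^c$ while keeping the instance size and alphabet size polynomial in $N$ and $1/\eps$, preserving the projection property throughout.

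First, I would invoke the PCP theorem to obtain a two-prover one-round MIP for $\SAT$ with perfect completeness, constant soundness error, constant-size alphabets, verifier randomness $O(\log N)$, and the projection (equivalently, functionality) property; a canonical such MIP comes from the LABEL-COVER encoding of $3$-SAT. Second, I would compose this base PCP with a manifold-vs.-point low-degree test in the spirit of Moshkovitz and Raz, arranging the composition so that the outer prover supplies a low-degree polynomial and the inner prover supplies a consistent evaluation. Because the inner prover's value is determined by the outer prover's value at the queried point, the composed game is automatically a projection game, and by the Moshkovitz--Raz analysis it achieves soundness $1/\polylog(N)$ with instance size nearly linear in $N$ and alphabet size $\poly(1/\eps)$. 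This already establishes the conjecture for $\eps$ super-polylogarithmically small in $N$.

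The main obstacle will be pushing the soundness from $1/\polylog(N)$ all the way down to $1/N^{\Omega(1)}$ without a super-polynomial blowup in either the instance size or the alphabet. The default amplification tool is parallel repetition, but by Raz's theorem $k$-fold repetition inflates the alphabet by an $|A|^{k}$ factor, and reducing soundness from $1/\polylog(N)$ to $1/\poly(N)$ requires $k = \Theta(\log N / \log \log N)$, which blows the alphabet up to quasi-polynomial in $N$. Circumventing this trade-off appears to require either a direct-product amplification tailored to projection constraints, or a Dinur-style recursive gap amplification adapted to preserve projection structure while handling a non-constant alphabet. I would also need to ensure uniformity (biregularity of the underlying bipartite graph), which should come essentially for free from the above constructions by symmetrizing the verifier's query distribution, as is done in standard reductions to \textsc{Label-Cover}. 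The projection-preserving soundness amplification step is precisely the barrier that has kept Conjecture \ref{fullconjecture} open, and any proof plan reduces in the end to overcoming it.
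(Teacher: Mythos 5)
You have correctly recognized that this statement is not a theorem of the paper at all: it is Moshkovitz's Projection Games Conjecture, which the paper merely restates and then uses as a hypothesis (in its ``Biregular'' strengthening, Conjecture \ref{fullconjecture2}) for Theorems \ref{fullCSthmstrong} and \ref{fullstatthmstrong}. The paper offers no proof of it, so there is nothing to compare your attack against except the appendix, whose only content relevant to this statement is the routine equivalence between projection games and two-prover MIPs satisfying functionality (i.e., that Conjecture \ref{fullconjecture} is the same assertion as Conjecture \ref{fullconj16}) --- an equivalence your proposal implicitly uses when you identify the projection property with functionality. Your sketch of the state of the art is accurate: Moshkovitz--Raz-style composition gives projection games with soundness about $1/\polylog(N)$ at near-linear size, and pushing to soundness $1/N^{\Omega(1)}$ with only $\poly(N,1/\eps)$ size and $\poly(1/\eps)$ alphabet is exactly what is open, since parallel repetition inflates the alphabet quasi-polynomially at that range of $\eps$. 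So the honest conclusion you reach --- that no proof is currently available and that the barrier is projection-preserving soundness amplification --- is the correct assessment; just be careful not to present the manifold-vs.-point composition step as if it already yields alphabet size $\poly(1/\eps)$ for arbitrarily small polynomial $\eps$, which would overstate what is known. Your remark that uniformity (biregularity) comes essentially for free from the standard constructions is consistent with the paper's own justification for its Biregular PGC, which cites the fact that the most efficient known reductions to projection games produce biregular graphs.
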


The equivalence of projection games and 2-prover MIP's with functionality implies that the PGC is equivalent to Conjecture \ref{fullconj16} from Section \ref{fullsec:projection}.

\section{A Bad Example for Natarajan's Algorithm}
First we give the problem solved by Natarajan's algorithm.  We are given a
collection $\cal C$ of vectors (the columns of a matrix $B$) and a target
vector $\bfb$.  The algorithm successively picks columns from $\cal
C$ in an attempt to represent $\bfb$ approximately as a linear
combination of the selected columns.
The algorithm
guarantees that the number of columns selected by the algorithm
is at most $\lceil Opt(\eps/2)\left [18||\bfBB^+||^2\ln
(||\bfb||/\eps)\right ]\rceil$ and that there is a linear combination of the selected variables within $L_2$ distance
at most $\eps$ of $\bfb$.  Here $Opt(\eps/2)$ denotes the minimum
sparsity of a linear combination with $L_2$ error at most $\eps/2$ and
$\bfBB^+$ is the pseudoinverse of the matrix obtained from
$B$ by normalizing each column.  Our goal is to show that the
$||\bfBB^+||^2$ factor is necessary.

The algorithm is as follows.  Start with $S=\emptyset$, $V=\cal
C$, and
$\bfb'=\bfb$.  At the end of a generic iteration of the algorithm, we know that the
vectors in $S$ are orthogonal, that each is orthogonal to
$\bfb'$ and to all $\bfv\in V-S$.
In the $i$th iteration, the algorithm chooses the $\bfv\not \in S$
which has maximum value of $(\bfb'\cdot \bfv)/||\bfv||$ (in other words,
has smallest angle with $\bfb'$).
That vector $\bfv^*$ is added to $S$.  Now we make
$\bfb'$ and the vectors in $V-S$ orthogonal to $\bfv^*$ by transforming
those vectors,
by setting $\bfb':=\bfb'-(\bfb'\cdot \bfv^*)/(\bfv^*\cdot \bfv^*)$ and
$\bfu:=\bfu-(\bfu\cdot \bfv^*)/(\bfv^*\cdot \bfv^*)$ for each $\bfu\in V-S$.
This vector-selection phase continues as long as $||\bfb'||>\eps$.

The span of the vectors in $S$ is
the same as that of the ``corresponding'' vectors in $\cal C$,
because the linear operations made during the execution of
the algorithm do not change the spans.    When we are finished
selecting vectors, we write $\bfb-\bfb'$ as a linear combination
of the vectors in $S$ or, equivalently, of the vectors in $\cal
C$ corresponding to the vectors in $S$.

The example is simple.  We simply have to show that the
vector-selection phase selects a large number of vectors, relative to
$Opt(\eps/2)$.
The vector $\bfb=\bfe^{(m)}$.  There
are $m+2$ vectors.  The first $m/2$, $\bfv_1,\bfv_2,...,\bfv_{m/2}$, are binary and have two 1's
each, $\bfv_i$ having ones in positions $2i-1$ and $2i$.
There are two additional vectors, which we call $\bfv^+$ and $\bfv^-$.
We start by setting $\bfv^+=\bfgamma:=(1,-1,1,-1,1,-1,...,1,-1)^T\in \RR^m$
and $\bfv^-=-\bfgamma$,
and then we add $\delta\bfe^{(m)}$ to $\bfv^+$ and $\bfv^-$, where
$\delta=1/\sqrt m$.
Set ${\bfv^+}'=\bfv^+$ and ${\bfv^-}'=\bfv^-$.

It is clear that $(\bfv^++\bfv^-)/(2\delta)=\bfb$, so that there is an
exact way to write $\bfb$ as a linear combination of two vectors.
Let $\eps=0.5\sqrt 2$.
Therefore
$Opt(\eps/2)=2$.
We will show that the algorithm may add
the vectors in the order $\bfv_1,\bfv_2,...,\bfv_{m/2}$, not
terminating until it's added $m/2$ vectors.  Since
$\ln(||\bfb||/\eps)=\ln((\sqrt m)/(1/2))=\ln (2\sqrt m)\le
1+0.5\cdot \ln m$, the number of vectors produced by the algorithm
should be at most $\lceil 2\left [18\cdot||\bfBB^+||^2(1+0.5 \ln
m)\right ]\rceil$; we infer that $||\bfBB^+||^2$ must be $\Omega(m/\log
m)$ and that the $||\bfBB^+||^2$ factor is necessary in the
analysis.

Let $\bfp_j$ denote the $m$-vector which starts with $m-j$ 0's and ends
with $j$ 1's.
We will prove inductively that at the beginning of the $h$th
iteration, $h=0,1,2,...,m/2-1$, $S=\{\bfv_1,\bfv_2,...,\bfv_h\}$;
$V-S$ consists of $\bfv_{h+1},...,\bfv_{m/2}$, together with
${\bfv^+}'$
and ${\bfv^-}'$ (both of which change over time), where
${\bfv^+}'=\bfgamma+\delta\cdot \bfp_{m-2h}$ and
${\bfv^-}'=-\bfgamma+\delta\cdot \bfp_{m-2h}$; and that
the vector $\bfb'$ (which changes over time) equals $\bfp_{m-2h}$.

It is easy to see that the invariant is true at the beginning of
the 0th iteration. 

Now we start the inductive proof.  Choose any $h\ge 0$ and
assume that the invariant holds.  To decide which vector to add,
we compare $(\bfb'\cdot \bfv_i)/||\bfv_i||=2/\sqrt 2=\sqrt 2$ (for any
$i\ge h+1$)
to $(\bfb'\cdot {\bfv^+}')/||{\bfv^+}'||
=(\bfb'\cdot
{\bfv^-}')/||{\bfv^-}'||=((m-2h)\delta)/\sqrt{(m-2h)(1+\delta^2)}=\sqrt{m-2h}\left
[ \frac \delta {\sqrt{1+\delta^2}}\right ]$.
Because $\delta<\sqrt 2/\sqrt m$, the former is larger, and hence
we can assume that the algorithm adds $\bfv_{h+1}$ to $S$.  Vectors
$\bfv_{h+2},\bfv_{h+3},...,\bfv_{m/2}$ are all orthogonal to $\bfv_h$ so do
not have to change.  We update ${\bfv^+}'$ by
$${\bfv^+}':={\bfv^+}'-\bfv_{h+1}\left (\frac {\bfv_{h+1}\cdot
{\bfv^+}'}{\bfv_{h+1}\cdot \bfv_{h+1}}\right
).$$
Since $\bfv_{h+1}\cdot {\bfv^+}'=2\delta$ and $\bfv_{h+1}\cdot \bfv_{h+1}=2$, this formula is
$${\bfv^+}':={\bfv^+}'-\delta \bfv_{h+1}.$$
Since before modification
${\bfv^+}'=\bfgamma+\delta\cdot \bfp_{m-2h}$, the new
${\bfv^+}'=\bfgamma+\delta\cdot \bfp_{m-2h-2}$.
A similar computation shows that the new ${\bfv^-}'=
-\bfgamma+\delta\cdot \bfp_{m-2h-2}$.
Yet another similar computation shows that $\bfb'$ is now updated to
$\bfp_{m-2h-2}$.
This completes the inductive step.

Since until the end of the last iteration, $||\bfb'||\ge \sqrt
2>\eps$, the algorithm continues running for $m/2$ iterations,
when $||\bfb'||$ drops to 0 and the algorithm terminates. \qed

The authors have a proof, to be included in a future version of this paper, 
that the same example thwarts the LASSO algorithm.
\end{document}